\newcommand{\Real}{\mathbb{R}}
\def \b1 {\mathbf{1}}
\theoremstyle{plain} \newtheorem{thm}{Theorem}
 \newtheorem{lem}[thm]{Lemma}
\theoremstyle{definition} 
\theoremstyle{remark} \newtheorem{rem}{Remark}[section]
\title{Compressed Blind De-convolution of Filtered Sparse Processes\thanks{This research
was supported by NSF CAREER award ECS 0449194}}
\author{V. Saligrama$^{\dagger}$ \quad M. Zhao\footnote{Author Names Appear in Alphabetical Order}\\Department of
Electrical and Computer Engineering\\ Boston University, MA 02215
\\\tt \{srv,\,mqzhao\}@bu.edu}
\date{}
\begin{document}
\maketitle

\begin{abstract}
Suppose the signal $x\in \Real^n$ is realized by driving a k-sparse signal $u \in \Real^n$ through an arbitrary unknown stable discrete-linear time invariant system $H$, namely, $x(t) = (h * u)(t)$, where $h(\cdot)$ is the impulse response of the operator $H$. Is $x(\cdot)$ compressible in the conventional sense of compressed sensing? Namely, can $x(t)$ be reconstructed from small set of measurements obtained through suitable random projections? For the case when the unknown system $H$ is auto-regressive (i.e. all pole) of a known order it turns out that $x$ can indeed be reconstructed from $O(k\log(n))$ measurements. We develop a novel LP optimization algorithm and show that both the unknown filter $H$ and the sparse input $z$ can be reliably estimated.
\end{abstract}

\section{Introduction}

In this paper we focus on blind de-convolution problems for filtered sparse processes. These types of processes naturally arise in reflection seismology~\cite{robinson}. The LTI system $H$ is commonly referred to as the wavelet, which can be unknown, and serves as the input signal. This input signal passes through the different layers of earth and the reflected signal $z$ corresponds to the reflection coefficients from the different layers.  The signal $z$ is typically sparse. The reflected output, which is referred to as the seismic trace, is recorded by a geophone. Other applications of filtered sparse processes include nuclear radiation~\cite{doucet}, neuronal spike trains~\cite{Gerstner} and communications~\cite{snyder}.

Specifically, a sparse input $u(t)$ is filtered by an unknown infinite impulse response (IIR) discrete time stable linear filter $H$ and the resulting output $$x(t) = (Hu)(t) = \sum_{i} u(\tau_i) h(t -\tau_i)$$ is measured in Gaussian noise, namely, $y(t) = x(t) + n(t)$ for $t = 0,\,1,\,\ldots,N$. The goal is to detect $z(t)$, and estimate the filter $H$. The main approach heretofore proposed for blind de-convolution involves heuristic iterative block decomposition schemes (first proposed in \cite{mendel}). Here the filter and sparse inputs are alternatively estimated by holding one of them constant. While these algorithms can work in some cases, no systematic performance guarantees currently exist. We explore a convex optimization framework for blind de-convolution.

In addition we consider the compressed sensing problem, namely, $x(t)$ is compressed by means of a random Gaussian filter ensemble, as described in Figure~\ref{fig:block1} and the resulting output is measured noisily. Analogously, we can consider a random excitation model as in Figure~\ref{fig:block2}. Our task is to detect $z(t)$ and estimate $H$. Our goal is to characterize the minimum number of random samples required for accurate detection and estimation.
\begin{figure}[ht]
\begin{centering}
\includegraphics[width = 8.0cm]{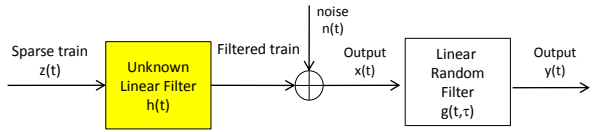}
\caption{\small Compressed blind de-convolution.}
\label{fig:block1}
\end{centering}
\end{figure}
\begin{figure}[ht]
\begin{centering}
\includegraphics[width = 8.0cm]{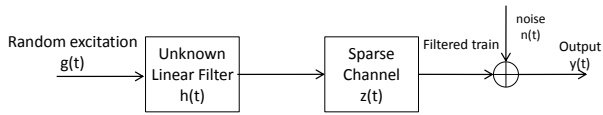}
\caption{\small Estimation of filtered sparse process: Random Excitation.}
\label{fig:block2}
\end{centering}
\end{figure}

\subsection{Comparison with Compressed Sensing}
Note that this is significantly different from the standard Compressed sensing(CS)~\cite{Candes1,Donoho1} problem. In standard CS we have a signal or image, $x \in \Real^n$, which is sparse in some transform domain. Specifically, there is a known orthonormal matrix $H$ such that the transformed signal $z = H^T x$ is k-sparse, namely, has fewer than $k$ non-zero components\footnote{This is often referred to as transform domain sparsity.}. A matrix $G \in \Real^{m\times n}$ then maps $x$ to measurements $y = Gx = GHu$.  For suitable choices of matrices $G$, such as those satisfying the so called Restricted Isometry Property (RIP), the k-sparse signal $z$ can be recovered with $O(k\log(n))$ measurements as a solution to a convex optimization problem:
$$
\min \|u\|_1 \,\,\, \mbox{subject to} \,\,\,y = GHu
$$
This result holds for all sparsity levels $k \leq \alpha n,\,\alpha <1$, for sufficiently small $\alpha$. There has been significant effort in CS in recent years leading to various generalizations of this fundamental result. This includes the case when the signal $x$ is approximately sparse (see \cite{Donoho2,Candes3}) and when the measurements are noisy, i.e., $y = GHu+e$ (see ~\cite{Candes3}).

This paper is a significant extension of CS to cases where $H$ is not only not orthonormal but also arbitrary and unknown. Specifically, $H$, is a causal discrete linear time invariant system (LTI) with an unknown impulse response function $h(\cdot)$ as described above. A typical signal $x$ is neither sparse nor approximately sparse as we will see in Section~\ref{sec:simulation}.

\subsection{Our Approach}
Our CS problem (schematically shown in Figures~\ref{fig:block1}~\ref{fig:block2}) boils down to determining whether there is a sampling operator $G$ with $O(k \log(n))$ samples such that the signal $x$ can be recovered uniquely from the samples $y=Gx = GHu$ using a convex optimization algorithm. It turns out that this is indeed the case when $H$ is belongs to the set of stable finite dimensional AR processes of a known order.

At first glance the problem as posed appears difficult. For one there is no reason $GH$ satisfies isometry property when $H$ is not orthonormal. To build intuition we describe a practically relevant problem. A specific example is when $x$ is a
one-dimensional piecewise constant signal. Such a signal is not sparse but does have a sparse derivative, namely,  $u(t)=x(t)-x(t-1)$ is sparse. Clearly, the signal $x$ can represented as an output of an (integral) operator $H$ acting on a sparse input $u$, namely, $x=Hu$. However, $H$ is no longer orthonormal. To account for this scenario one usually minimizes the total variation (TV) of the signal. A compressed sensing perspective for this case has already been  developed~\cite{Candes2}.

We develop an alternative approach here. Suppose we now filter $x$ through an LTI system $G$ whose impulse response is $g(t)$. Mathematically, we have,
$$
y(t) = (g * x)(t) = (g * h *u)(t) = ((g *h)*u)(t) = (h *g*u)(t)
$$
Since, the composite system $g*h$ is LTI we have that,
$$
z(t) \stackrel{\Delta}{=} y(t) - y(t-1) = g*(x(t) - x(t-1)) = (g*u)(t)
$$
Now we are in the familiar situation of $z = Gu$ of the standard CS problem, except that $G$ is a Toeplitz matrix. Consequently, if the Toeplitz matrix $G$ satisfies the RIP property we can recover $z$ using standard tools in CS. Indeed RIP properties of Toeplitz matrices have been studied~\cite{saligrama}. Note that this idea generalizes to arbitrary but known finite dimensional stable LTI systems, $H$. The main idea being used here is the commutative property of convolutions.

However, the question arises as to how to deal with unknown system $H$? It turns out that corresponding to every finite dimensional LTI system there is an annihilating filter~\cite{vetterli}. If $H$ is a pth order linear dynamical system it turns out that the annihilating filter, $H^{\perp}$ is parameterized by $p$ parameters. Now employing commutativity of convolution, namely, $g *h = h*g$, followed by filtering through the annihilator we are left with a linear characterization of the measurement equations.
We are now in a position to pose a related $\ell_1$ optimization problem where the parameters are the sparse signal $z$ as well as the parameters governing the annihilating filter. Our proof techniques are based on
duality theory.

Strictly speaking, for AR models commutativity is not necessary. Indeed, we could consider general random projections, but this comes at a cost of increasing the number of measurements as we will see later. On the other hand RIP properties for random projections is (provably) significantly stronger than Toeplitz matrices. Nevertheless note that in the random excitation scenario of Figure~\ref{fig:block2}, the structure does not lend itself to a random projection interpretation. For these reasons we consider both constructions in the paper.

The paper is organized as follows. The mathematical formulation of
the problem is presented in Section \ref{sec:setup}. Section
\ref{sec:alg} describes the new $\ell_1$ minimization  algorithm. The result for recovery with AR filtered processes (Theorem \ref{thm1}) is stated in this
section. The proof of Theorem \ref{thm1} can be found in Section
\ref{sec:proof}. To help the reader understand the main idea of the
proof we first consider a very simple case and Section
\ref{sec:general} provides the proof for the general case. Section
\ref{sec:blind} addresses the blind-deconvolution problem, which can
be regarded as a noisy version of our problem. We use LASSO to solve
this problem and the detailed proof is provided in Section
\ref{sec:proof2}. In Section \ref{sec:ext}, we extend the our
techniques to two related problems, namely, decoding of ARMA process
and decoding of a non-causal AR process. Finally, simulation results
are shown in Section \ref{sec:simulation}.

\section{Problem Set-up}\label{sec:setup}
Our objective is to reconstruct an autoregressive (AR)
process $x(t)$ from a number of linear and non-adaptive
measurements. An autoregressive model is known as an ``all-pole"
model, and has the general form
\begin{eqnarray}\label{eq:AR}x(n) +\sum_{i=1}^p a_i
x(t-i)=z(t)\end{eqnarray} where $z(t)$ is a sparse driving process. We assume the vector $z=[z_0,\cdots,z_{n-1}]^T$ is $k$-sparse, that
is, there are only $k$ non-zero components in $z$. The task of
compressed sensing is to find the AR model coefficients
$a=[a_1,\cdots,a_p]^T$ and the driving process
$z=[z_0,\cdots,z_{n-1}]^T$ from the measurement $y$. In this paper, we assume that the AR process $x(t)$ is
stable, that is, the magnitude of all the poles of the system is
strictly smaller than 1. In later discussion, we use $x_t$ or $x(t)$ interchangeably for
convenience of exposition.

Note that in standard CS setup, the signal $x$ is assumed to be
sparse in some \emph{known} transform space. However, in our
problem, the AR model is assumed to be unknown and the main
contribution of this paper is to solve this new problem efficiently.

We consider two types of compressed sensing scenarios:

\subsection{\bf Toeplitz Matrices} Here we
realize $m$ measurements by applying the sensing matrix $G$ to
signal $x = [x_0,\cdots,x_{n-1}]^T$.
\begin{eqnarray}\label{eq:sense}
\begin{bmatrix}
y_0\\y_1 \\
\vdots\\
y_{m-1}
\end{bmatrix}
=\begin{bmatrix}
g_{n-m} & g_{n-m-1} & \cdots & g_0 & \cdots & 0 & 0\\
g_{n-m+1} & g_{n-m} & \cdots &g_1 & g_0 & \cdots & 0\\
\vdots & \vdots & \ddots & \vdots & \ddots & \ddots & 0\\
g_{n-1} & g _{n-2} & \cdots & \cdots & \cdots & g_1 & g_0
\end{bmatrix}
\begin{bmatrix}
x_0\\x_1\\\vdots\\x_{n-1}
\end{bmatrix}
\end{eqnarray}
where each entry $g_{i}$ is independent Gaussian random variable
$\mathcal{N}(0,1)$ or independent Bernoulli $\pm 1$ entries. Here
the Toeplitz matrix $G$ preserves the shift-structure of the signal. Roughly speaking,
assume $z'$ is a shifted version of $z$ (disregarding the boundary
effect), then $Gz'$ is also just a shifted version of $Gz$. This is particularly suitable for the random excitation model of Figure~\ref{fig:block2}.

For notational purposes we denote by $x^{[s]}$ (or $G^{[s]}$) to
denote the subvector of $x$ (or submatrix of $G$) that is composed
of the last $s$ components (or $s$ rows) of $x$ (or $G$). By rearranging the above Equation \ref{eq:sense} and using the shift-property of $G$, we have the
following equation.
\begin{eqnarray}\label{eq:mult}
Y =
\begin{bmatrix}
y_{p} &y_{p-1}&\cdots& y_1 & y_0 \\
y_{p+1} &y_{p}&\cdots& y_{2} & y_1 \\
\vdots &\vdots&\ddots&\vdots&\vdots\\
y_{m-1} &y_{m-2}&\cdots& y_{m-p} & y_{m-p-1} \\
\end{bmatrix}
\begin{bmatrix}
1\\
a_1\\
\vdots\\
a_{p}
\end{bmatrix}
= G^{[m-p]}z
\end{eqnarray}
where we recall that $z= [z_0,\cdots,z_{n-1}]^T$. Now Equation \ref{eq:mult} is simplified to
\begin{eqnarray}\label{eq:model}Ya+y^{[m-p]}=G^{[m-p]}z\end{eqnarray}
where $a=[a_1,\cdots,a_p]^T\in \mathbb{R}^p$ and $z\in\mathbb{R}^n$
($k$-sparse) need to be decoded from the model.

\subsection{\bf Random Projections} Here we consider randomly projecting the raw measurements $x(t)$, namely,
$$
y(t)= \sum_{\tau=0}^{n-1} g_{t,\tau} x(\tau),\,\, t=0,\,1,\,\ldots,\,m
$$
where, each entry $g_{t,\tau}$ is an independent Gaussian random variable
$\mathcal{N}(0,1)$ or independent Bernoulli $\pm 1$ entry. The reason for choosing random projections over random filters is that IID random Gaussian/Bernoulli matrix ensembles have superior RIP constants. The optimal RIP constants for toeplitz constructions has not been fully answered. Nevertheless, note that to form the matrix $Y$ with random projections requires significantly more projections. This is because we can no longer exploit the shift-invariant property of convolutions. For instance, consider again the matrix $Y$ of Equation~\ref{eq:mult} above: if random projections were employed instead of Toeplitz construction the entry $y_1$ on row 1 will not be equal to the entry $y_1$ in the second row. This means that for a $p$th order model we will require $m\times p$ measurements. \\

{\noindent \bf Notation: } To avoid any confusion, we use $u^*$ to
denote the true spike train and $u$ refers to any
possible solution in the decoding algorithm. Similarly, $a^*$ represents the true
coefficients.

\section{$\ell_1$-minimization Algorithm for AR Models}\label{sec:alg}

Since the AR model is unknown, standard decoding algorithms (e.g.,
Basis Pursuit \cite{Donoho2}, OMP \cite{Tropp3}, Lasso \cite{Wain1},
etc.) can not be directly applied to this problem. However, we can
regard the signal $(u,a)$ (the original signal $u$ together with the
unknown coefficients $a$) as the new input to the model and $(u,a)$
is still sparse if $p$ (the length of $a$) is small.

With this in mind we solve the following $\ell_1$
minimization algorithm
\begin{eqnarray}\label{eq:form1}
\min_{u\in\mathbb{R}^n,a\in\mathbb{R}^p}\quad \|u\|_1 \,\,\text{
subject to}\quad Ya+y^{[m-p]}=G^{[m-p]}u
\end{eqnarray}

More generally, when the measurement $y$ is contaminated by noise,
that is, the sensing model becomes $y=Gx+w$ where $w$ is Gaussian
noise, the above LP algorithm will be replaced by Lasso,
\begin{eqnarray}\label{eq:Lasso}
\min_{u\in\mathbb{R}^n,a\in\mathbb{R}^{p}}\quad \frac{1}{2}\|Ya+
y^{[m-p]}-G^{[m-p]}u\|_2^2+\lambda\|u\|_1
\end{eqnarray}
where $\lambda$ is a tuning parameter that adapts to the noise
level.

Alternatively, the coefficient $a$ can be solved from Equation
\ref{eq:model} by taking pseudo-inverse of $Y$,
\begin{eqnarray}\label{eq:pseudo}
a = (Y^T Y)^{-1}Y^T\left(G^{[m-p]}u-y^{[m-p]}\right)
\end{eqnarray}
Then Equation \ref{eq:model} becomes
\[(I-Y(Y^T Y)^{-1}Y^T)y^{[m-p]}=(I-Y(Y^T
Y)^{-1}Y^T)G^{[m-p]}u\] and similar to Equation \ref{eq:form1} we
can apply the following $\ell_1$ minimization to find the solution
for $u$.
\begin{eqnarray}\label{eq:BP}
\min_{u\in\mathbb{R}^n}\quad \|u\|_1 \,\,\text{ subject to}\quad
Py^{[m-p]}=PG^{[m-p]}u
\end{eqnarray}
where $P$ denotes the projection matrix $I-Y(Y^T Y)^{-1}Y^T$ and
$\|u\|_1$ denotes the $\ell_1$ norm of $u$. Suppose the solution of
Equation \ref{eq:BP}  is $\hat u$. Then $a$ can be easily derived
by $\hat a = (Y^T Y)^{-1}Y^T\left(G^{[m-p]}\hat u-y^{[m-p]}\right)$
and the signal $x(n)$ can be recovered through Equation \ref{eq:AR}.

We note that Equation \ref{eq:BP} is equivalent to Equation
\ref{eq:form1} if $Y^T Y$ is invertible, which is always assumed to
be true in this paper. To summarize the above discussion, our
algorithm is summarized below.

\noindent
{\bf (1) Inputs:} Measurement $y$, sensing matrix $G$ and order of the system $p$.\\
{\bf (2) Compute $u$ and $a$:} Solve the $\ell_1$ minimization
(Equation \ref{eq:form1} or \ref{eq:BP}) or Lasso (Equation \ref{eq:Lasso}).\\
{\bf (3) Reconstruction:} Recover the signal $x(n)$ through forward propagation of the AR
model of Equation \ref{eq:AR}.\\

Before stating the main result, we recall that for
every integer $S$ the {\em restricted isometry constant}
\cite{Candes3,Candes5}, $\delta_S$ is defined to be the smallest
quantity such that $G^{[m-p]}_T$ obeys
\begin{eqnarray}\label{eq:RIP}
(1-\delta_S)\|x\|_2^2\leq \|G^{[m-p]}_Tx\|^2_2\leq
(1+\delta_S)\|x\|^2_2
\end{eqnarray}
for all subsets $T\subset\{0,1,\cdots,n-1\}$ of cardinality at most
$S$ and all $(x_j)_{j\in T}$.

Note that when the AR filter
$a(n)$ is known the result is a direct application of standard compressed
sensing results. We state this without proof below for the sake of completion.
In other words, if the coefficients $a(n)$ are known, $u^*(\cdot)$ is the true driving process in Equation~\ref{eq:AR} then $u^*(\cdot)$ is the unique minimizer of
\begin{eqnarray}
\min_{u\in\mathbb{R}^n,a\in\mathbb{R}^p}\quad \|u\|_1 \,\,\text{
subject to}\quad Ya+y^{[m-p]}=G^{[m-p]}u
\end{eqnarray}


A main result of our paper is the following where $a(n)$ is assumed
to be unknown. We need the following assumptions before we state our the theorem.

\begin{enumerate}
\item {\bf Constant Order:}
We assume that $p$, the order of AR process $x(n)$, is a constant
(i.e., $p$ does not scale with $n,m,S$).

\item {\bf Exponential Decay:}
Suppose the impulse response $|h(i)|$ of the AR model satisfies
$$|h(i)|\leq M\rho^i$$ for some constant $M$ and $0<\rho<1$.

\item {\bf Distance between Spikes:} We define the constant $l:= \left(\log(\frac{2}{1-\rho})+p\log
(\frac{6\beta_{\max}M}{\beta_{\min}})\right)/\log(\rho^{-1})+p$ and
impose the condition that any two spikes, $u_i^*,u_{j}^*$ satisfy $|i-j|>l,\,\,i \not = j$. This implies that the sparsity $k:=|\text{Supp}(u^*)|\leq \min\{S/l,S/3\}$.

\item {\bf Spike Amplitude:} We also
assume that any spike is bounded, $\beta_{\min}\leq |u_k|\leq
\beta_{\max},\forall k\in\text{Supp}(u^*)$.
\end{enumerate}

\begin{thm}\label{thm1}
Suppose assumptions $1$--$4$ above are satisfied. Let the integer $S$ satisfy $\frac{\delta_S}{1-3\delta_S}< 1$.
If $u^*(\cdot)$ is the true driving process in Equation~\ref{eq:AR} then it is the unique minimizer of
\begin{eqnarray}
\min_{u\in\mathbb{R}^n,a\in\mathbb{R}^p}\quad \|u\|_1 \,\,\text{
subject to}\quad Ya+y^{[m-p]}=G^{[m-p]}u
\end{eqnarray}
\end{thm}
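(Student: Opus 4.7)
The plan is to establish the theorem by constructing a dual certificate for the $\ell_1$ program, adapting the standard Cand\`es--Romberg--Tao strategy to accommodate the fact that the filter coefficients $a$ are also unknown. Writing any feasible perturbation as $(u^*+h,\,a^*+\Delta a)$, the feasibility constraint becomes $G^{[m-p]} h = Y\Delta a$, and the usual $\ell_1$ decomposition on $T := \mathrm{Supp}(u^*)$ gives
\[
\|u^*+h\|_1 \;\ge\; \|u^*\|_1 + \langle \mathrm{sgn}(u^*_T),\, h_T\rangle + \|h_{T^c}\|_1.
\]
Thus $u^*$ will be the unique minimizer as soon as I exhibit a dual vector $\pi\in\mathbb{R}^{m-p}$ satisfying (i) $((G^{[m-p]})^T\pi)_T = \mathrm{sgn}(u^*_T)$, (ii) $|((G^{[m-p]})^T\pi)_j|<1$ for $j\notin T$, and (iii) $Y^T\pi = 0$, together with injectivity of $G^{[m-p]}_T$ (which RIP on subsets of size $S\ge k$ supplies). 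Condition (iii) is the new ingredient relative to standard CS: it encodes the fact that the unknown $a$-direction must be orthogonal to the dual so that the minimum is insensitive to the choice of $a$.

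I would enforce (iii) by parameterizing $\pi = P\xi$ with $P := I - Y(Y^TY)^{-1}Y^T$ the projector onto $\ker Y^T$, and reducing to the ``projected'' sensing matrix $A := PG^{[m-p]}$. The natural candidate is the minimum-norm dual
\[
\xi \;=\; A_T\bigl(A_T^T A_T\bigr)^{-1}\mathrm{sgn}(u^*_T).
\]
With this choice (i) is automatic, so the work lies in (a) verifying that $A_T$ has full column rank with $A_T^T A_T$ well-conditioned, and (b) bounding $\|A_{T^c}^T \xi\|_\infty<1$.

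Both verifications reduce to controlling how the rank-$p$ projector $P$ distorts $G^{[m-p]}$. The lever here is the commutativity already exploited in the paper: since $y = GHu^*$, every column of $Y$ is a linear combination of columns of $G^{[m-p]}$ indexed near the spike locations of $u^*$, with coefficients given by the impulse response $h$. The exponential decay $|h(i)|\le M\rho^i$ together with the spacing $|i-j|>l$ decouples spikes: there should exist an index set $T'\supset T$ of size at most $k\,l\le S$ and a $|T'|\times p$ coefficient matrix $C$ (of bounded condition number determined by $\beta_{\min},\,\beta_{\max},\,M,\,\rho$) with
\[
Y \;=\; G^{[m-p]}_{T'}\,C + E,
\]
and $\|E\|$ a geometric tail in $\rho$ that the precise definition of $l$ in the hypothesis makes arbitrarily small. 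Substituting this into $P$ exhibits $P$ as the identity minus a rank-$p$ operator lying inside the span of $G^{[m-p]}_{T'}$, so all of the Gram inverses and cross inner products that appear in (a), (b) live on index sets of size at most $|T|+|T'|\le S$ and are controlled by $\delta_S$. A direct calculation in the spirit of the Cand\`es--Romberg--Tao dual construction should then give $\|A_{T^c}^T A_T(A_T^T A_T)^{-1}\|_{\infty\to\infty}$ bounded by $\delta_S/(1-3\delta_S)$, matching the hypothesis exactly.

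The main obstacle I foresee is making the factorization $Y = G^{[m-p]}_{T'}C + E$ quantitative enough that no slack is wasted on the $\delta_S/(1-3\delta_S)<1$ threshold. In particular, showing invertibility and a useful condition bound on the $p\times p$ Gram block arising from a single spike's window is the delicate step, which is presumably why the paper first treats a ``very simple case'' with a small number of spikes before globalizing. Once the single-spike picture is pinned down, the separation hypothesis on $l$ should let the general case follow from a perturbation argument, with cross-spike interactions suppressed geometrically in $\rho$ and cross-support/sensing interactions suppressed by the RIP constant $\delta_S$.
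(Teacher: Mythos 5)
Your dual-certificate framework and the three optimality conditions (prescribe $\mathrm{sgn}(u^*)$ on the support, strict bound off the support, $Y^T\pi=0$) are exactly the paper's ``Dual Optimal Conditions,'' so the setup matches. But your construction of the certificate diverges from the paper's at the decisive step, and the route you propose has a genuine gap. You enforce $Y^T\pi=0$ by projecting, $\pi=P\xi$ with $P=I-Y(Y^TY)^{-1}Y^T$, and then take the minimum-norm certificate for $A=PG^{[m-p]}$. Everything then hinges on conditioning $A_T^TA_T$ and bounding $\|A_{T^c}^TA_T(A_T^TA_T)^{-1}\|_{\infty\to\infty}$, which you assert comes out to $\delta_S/(1-3\delta_S)$ ``exactly.'' This is not substantiated, and it is implausible under the stated hypotheses: your own factorization $Y\approx G^{[m-p]}_{T'}C$ shows that $P$ annihilates $p$ directions lying inside $\mathrm{span}(G^{[m-p]}_{T'})$ with $T'\supset T$, so the post-projection cross-correlations between columns indexed by $T$ and by $T'\setminus T$ are governed by $C$ --- i.e., by the impulse response and the spike amplitudes --- and by $\|(Y^TY)^{-1}\|$, not by $\delta_S$ alone. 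Assumptions 1--4 give no lower bound on $\lambda_{\min}(Y^TY)$ (equivalently, on the conditioning of the Gram matrix of the shifted impulse responses weighted by the spikes), so the quantities your argument must control are simply not bounded by the theorem's hypotheses. This is not a presentational issue; it is the step where the proof would fail or require new assumptions.

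The paper's proof avoids this entirely by never inverting $Y^TY$. It applies the exact-reconstruction lemma (Lemma \ref{lem:free}) to the \emph{unprojected} $G^{[m-p]}$, prescribing values not just on $\mathrm{Supp}(u^*)$ but on the enlarged windows $T=\{j+i: j\in\mathrm{Supp}(u^*),\,0\le i\le l\}$: the value $\mathrm{sgn}(u^*_j)$ at each spike, a geometrically decaying ramp $(r/2)^{i-j-1}\mathrm{sgn}(x_j)$ on the $p$ slots after each spike, and $0$ on the rest of the window. The ramp's only purpose is to control the \emph{sign} of $\pi^TY$ (a forward computation using the estimates of Lemma \ref{lem:cond}, in particular $|x_j/x_{j+i}|\ge r$ and the exponential smallness of $x$ away from spikes). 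By flipping the ramp's signs one obtains $2^p$ certificates realizing every sign pattern of $\pi^TY\in\mathbb{R}^p$, and a convex combination of them annihilates $\pi^TY$ while preserving conditions 1 and 2. If you want to salvage your projection-based route, you would need to add and justify a quantitative invertibility hypothesis on $Y^TY$; otherwise the sign-pattern/convex-combination device is the ingredient your proposal is missing.
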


Intuitively speaking, the condition in the theorem requires that the
driving process $u(n)$ is sparse enough and any two spikes
$(u_i,u_{j})$ are reasonably far away from each other. This type of assumption is actually also necessary. In section
\ref{sec:general}, we give an example where two spikes are
consecutive  and show that in this case $x(n)$ can not be solved via
equation \ref{eq:BP}. The proof of Theorem \ref{thm1} is presented
in Section \ref{sec:proof}.

\begin{rem} The reader might be curious as to whether a random convolution train provides benefits over random projection. Note that by using random convolutions we can naturally exploit shift-invariance property. Since $Y \in \Real^{m-p \times p}$ as in Equation~\ref{eq:mult} is a partial Toeplitz matrix, we only need $m$ output measurements. In contrast for a random projection, since we can no longer exploit this property, we would require $O(mp)$ measurements.
\end{rem}

\subsection{Noisy Blind-deconvolution}\label{sec:blind}

We consider the noisy blind-deconvolution problem with IID Gaussian noise, $w_i\sim \mathcal{N}(0,\sigma^2)$, and measurements
\begin{eqnarray}\label{eq:mes}y(n) = x(n) +w(n)\end{eqnarray} where
the process $x(n)$ is modeled by $x(n) +\sum_{i=1}^p a_i
x(n-i)=u(n)$. In this section we consider the problem
of reconstructing the sparse spike train $u(n)$ and coefficients $a$
from the observed signals $y(n)$. This problem is called ``Blind
deconvolution'' \cite{doucet,Baziw} and it is a simplified version
of the Compressed Sensing problem where the sensing matrix $G$ is
identity matrix. To the best of our knowledge, even this simplified
problem is still not completely solved in literature. Therefore, we focus on the uncompressed noisy version here. The noisy compressed version is technically more involved and will be reported elsewhere.

Replacing $x(n)$ with $y(n)-w(n)$ in the AR model, we get
\begin{eqnarray}
y(n) +\sum_{i=1}^p a_i y(n-i)=u(n)+e(n)
\end{eqnarray}
where we denote $e(n) := w(n)+\sum_{i=1}^p a_i w(n-i)$.

Again by introducing $$Y = \begin{bmatrix}
 0 & \cdots & 0\\
 y_0 & \cdots &0\\
  \vdots & \ddots & \vdots\\
y_{p-1} &\cdots & y_0\\
 \vdots & \vdots & \vdots\\
 y _{n-2} & \cdots & y_{n-p}
\end{bmatrix}$$
we have the matrix-form system model
\begin{eqnarray}\label{eq:noise}
y+Ya=u+e
\end{eqnarray}

Here Lasso is used to solve the problem:
\begin{eqnarray}\label{eq:blind_deco}
\min_{u\in\mathbb{R}^n,a\in\mathbb{R}^p}\frac{1}{2}\|y+Ya-u\|_2^2+\lambda\|u\|_1
\end{eqnarray}

We can show that the solution of Lasso is very close to the true
$a^*$ and $u^*$. Before stating the theorem, we first introduce some
notation and technical conditions that will be used in the proof.

We denote the noiseless version of $Y$ as
$$X = \begin{bmatrix}
 0 & \cdots & 0\\
 x_0 & \cdots &0\\
  \vdots & \ddots & \vdots\\
x_{p-1} &\cdots & x_0\\
 \vdots & \vdots & \vdots\\
 x _{n-2} & \cdots & x_{n-p}
\end{bmatrix}$$
Denote the support of $u^*$ as $I$. We define $X_1$ as the matrix
comprising of the rows of $X$ indexed by $I$ and $X_2$ as the matrix
comprising of the rows of $X$ indexed by $I^c$. We also denote
$x_{\max}= \max_i |x_i|,\, u_{\min}=\min_{i\in I} |u_i|$ and
$a_{\max}=\max_{i} |a_i|$.

We assume that the AR process $x(n)$ satisfies the following set of
conditions.

\begin{description}
  \item[(1)] The smallest eigenvalue $\lambda_{min}(X_2^TX_2)\geq \frac{\|x\|_2^2}{c}\geq \frac{4np\sigma^2}{(\sqrt{2}-1)^2}$
  for some constant $c>1$.
  \item[(2)] $\|X_1^T\text{sgn}(z_I^*)\|_\infty\leq \|x\|_2\sqrt{\log n}$,
  \item[(3)] $x_{\max}\geq 2\sigma \sqrt{\log n}$  and
  $\frac{x_{\max}^2}{\|x\|_2^2}\leq \min\{\frac{1}{4c\sqrt{2pn}},\left(\frac{1}{24cp\sqrt{\log n}}\right)^2\}$.
\end{description}

In practice, condition (1) is generally satisfied. For instance, if the signal $x$ is persistent, $\frac{1}{\|x\|_2}X_2^TX_2$ converges to a constant invertible matrix. Condition (3) is also standard in compressed sensing, which says we need $SNR\geq O(\log
n)$. In addition, we also need the assumption that no components are
dominantly large (compared with the total energy of $x$). The upper
bound for $x_{\max}/\|x\|_2$ can be relaxed but the current setup
simplifies the analysis.

Condition (2) is new. Let us consider two scenarios. In the first
scenario, each spike in $u_{I}$ can be either positive or negative
with equal probability (i.e. $\text{sgn}(u_{I})$ is Bernoulli $\pm
1$). In this case, $X_1^T\text{sgn}(u_I^*)$ behaves like a
sub-Gaussian sum and it is usually upper bounded by
$\|x\|_2\sqrt{\log n}$ with high probability. On the other hand, let
us also consider the case when all the spikes in $u_{I}$ are of the
same sign, say {\em positive}. In this case each entry in $X_1^T$
and $\text{sgn}(u_I^*)$ is positive and the inner product of these
two aligned signals is typically much larger than the first
scenario. This phenomena is also illustrated in the experiments
shown in Figure \ref{fig:sgn}. In the experiment, the AR model is
$x_t-1.4x_{t-1}+0.45x_{t-2}=u(t)$. The blue curve corresponds to the
scenario when $\text{sgn}(u_i)$ ($u_i$ is a spike) is Bernoulli $\pm
1$. The red curve corresponds to the case when the sign of any spike
$u_i$ is always $+1$. Each point on the curve is an average over 40
trials. We can see that in the first scenario (blue curve) we can
tolerate many more spikes. To the best of our knowledge, this
behavior does not exist in standard compressed sensing problem.

\begin{figure}[t]
\begin{centering}
\includegraphics[width = .55\textwidth]{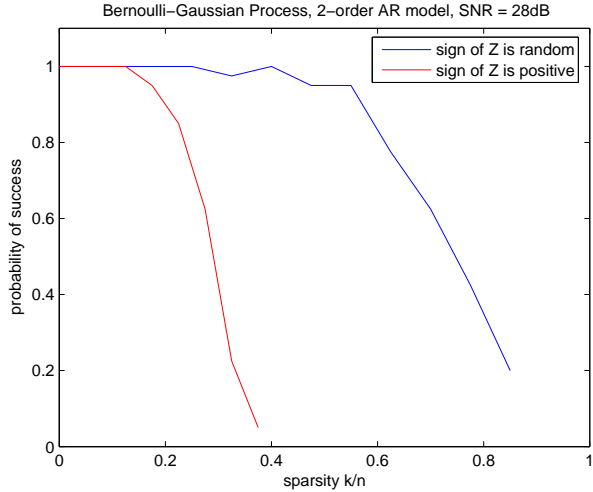}
\caption{\small Comparison of two sign conditions for $u$. The AR
model is $x_t-1.4x_{t-1}+0.45x_{t-2}=u(t)$. Noises ere added to the
measurements and $SNR=28$ dB. In one experiment, each sign of each
spike is either positive or negative with equal probability. In the
other experiment, the sign of the spikes is always positive. }
\label{fig:sgn}
\end{centering}
\end{figure}

\begin{thm}\label{thm2}
Denote $P:=I-Y(Y^TY)^{-1}Y^T$ and assume condition (1),(2) and (3)
stated above are satisfied. We also assume parameter $\lambda$ is
chosen such that $\lambda\geq 6\sigma pa_{\max}\sqrt{\log n}$ and
$u_{\min}\geq 2\lambda$, the solution to Lasso \ref{eq:blind_deco}
is given by
\begin{eqnarray}
\hat{u}_I &=& (P^T_IP_I)^{-1}(P^T_Ie-\lambda\text{sgn}(u^*_I))+u^*_I\\
\hat{u}_{I^c} &=& 0\\
 \hat a &=& -(Y^TY)^{-1}Y^T(y-\hat u)
\end{eqnarray}
and we have $\text{sgn}(u^*)=\text{sgn}(\hat u)$ with probability at
least $1-8p/n-(p+1)2^{-n/5}$.
\end{thm}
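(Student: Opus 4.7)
The plan is a primal--dual witness (PDW) argument for the Lasso in (\ref{eq:blind_deco}), in the spirit of Wainwright's sign-support recovery analysis. First I profile out $a$: for any fixed $u$ the inner minimization gives $\hat a(u)=(Y^TY)^{-1}Y^T(u-y)$, which reduces the objective to
\[
 \min_{u\in\Real^n}\tfrac12\|P(y-u)\|_2^2+\lambda\|u\|_1,\qquad P:=I-Y(Y^TY)^{-1}Y^T.
\]
Because $y+Ya^*=u^*+e$ (Equation \ref{eq:noise}) and $PY=0$, we have $P(y-u^*)=Pe$. I then build a candidate $(\hat u,\hat a)$ by fixing $\hat u_{I^c}=0$ and solving the restricted first-order condition on $I$; using that $P$ is a symmetric projection (so $P_I^T Pe=P_I^T e$), this first-order condition becomes
\[
 P_I^T P_I(\hat u_I-u^*_I)=P_I^T e-\lambda\,\text{sgn}(u^*_I),
\]
which inverts to the formula stated in the theorem, and $\hat a$ is then defined by the $a$-KKT $\hat a=-(Y^TY)^{-1}Y^T(y-\hat u)$. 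Two standard conditions must be verified for this witness to be the unique Lasso minimizer with the correct signed support: (i) \textbf{sign preservation on $I$:} $\|\hat u_I-u^*_I\|_\infty<u_{\min}$, and (ii) \textbf{strict dual feasibility on $I^c$:} $\|[P(y-\hat u)]_{I^c}\|_\infty<\lambda$.

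For (i) I split $\hat u_I-u^*_I=(P_I^T P_I)^{-1}P_I^T e-\lambda(P_I^T P_I)^{-1}\text{sgn}(u^*_I)$. The stochastic piece is Gaussian in $e$ with covariance proportional to $(P_I^TP_I)^{-1}$; once the smallest eigenvalue of $P_I^TP_I$ is bounded away from zero (which follows from Condition (1) via the perturbation step below), a union bound over $|I|\leq n$ coordinates controls it at scale $O(\sigma\sqrt{\log n})$ in $\ell_\infty$. The deterministic piece is exactly where Condition (2) enters: the hypothesis $\|X_1^T\text{sgn}(u^*_I)\|_\infty\leq\|x\|_2\sqrt{\log n}$ is the \emph{aligned-spike cancellation} needed to keep $(P_I^T P_I)^{-1}\text{sgn}(u^*_I)$ small in $\ell_\infty$. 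Combining both bounds with $\lambda\geq 6\sigma pa_{\max}\sqrt{\log n}$ forces $\|\hat u_I-u^*_I\|_\infty\leq\lambda$, which together with $u_{\min}\geq 2\lambda$ preserves the sign. Step (ii) is analogous: the expansion $[P(y-\hat u)]_{I^c}=(Pe)_{I^c}-\bigl[P_I(P_I^T P_I)^{-1}(P_I^T e-\lambda\,\text{sgn}(u^*_I))\bigr]_{I^c}$ splits into a Gaussian concentration term in $e$ and a deterministic ``incoherence'' term, each of which must be bounded by $\lambda$ using Conditions (1)--(3).

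The main obstacle is that $P$ is \emph{not} independent of the noise: $Y=X+W$ where $W$ is the Toeplitz shift-matrix of the noise samples $w$, and $e=w+Wa^*$ is a linear functional of the same $w$. Consequently one cannot treat $P_I^T e$ as a Gaussian vector with a fixed, $e$-independent covariance, and the verifications of (i) and (ii) cannot be performed by standard Gaussian tail bounds directly on $P$. The technical core of the proof must therefore be a perturbation bound of the form $P=\tilde P+\Delta$, where $\tilde P:=I-X(X^TX)^{-1}X^T$ is the noiseless projection and $\|\Delta\|$ is small with high probability. The eigenvalue floor $\lambda_{\min}(X_2^TX_2)\geq\|x\|_2^2/c\geq 4np\sigma^2/(\sqrt{2}-1)^2$ from Condition (1), together with the smallness of $x_{\max}^2/\|x\|_2^2$ from Condition (3), controls the cross-terms $X^TW$, $W^TX$, $W^TW$ in $Y^TY=X^TX+X^TW+W^TX+W^TW$ via Hanson--Wright / matrix Bernstein inequalities, and thereby pins $P$ close to $\tilde P$. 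Once this perturbation is established, all remaining randomness is cleanly Gaussian, and a union bound over the $p$ coordinates of $\hat a$ and the $n-|I|$ entries on $I^c$ gives the claimed failure probability $8p/n+(p+1)2^{-n/5}$.
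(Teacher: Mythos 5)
Your proposal is correct and follows essentially the same route as the paper: a primal--dual witness (KKT) construction that fixes $\hat u_{I^c}=0$, solves the restricted stationarity condition on $I$ to obtain the stated $\hat u_I$ and $\hat a$, and then verifies sign consistency on $I$ and strict dual feasibility on $I^c$, using the split $Y=X+W$ together with Conditions (1)--(3) to control the dependence between the projection and the noise. The only difference is organizational: the paper reduces the $I^c$ residual to $-\lambda Y_2(Y_2^TY_2)^{-1}Y_1^T\text{sgn}(u^*_I)+(I-Y_2(Y_2^TY_2)^{-1}Y_2^T)e_{I^c}$ via block-matrix identities and bounds each term directly, rather than passing through a global perturbation $P=\tilde P+\Delta$ of the projection matrix.
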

{\noindent \bf Remark: } The assumption $u_{\min}\geq 2\lambda$
implicitly implies an SNR bound $O(\log n)$ for the smallest spike. The
assumption $\lambda\geq 6\sigma pa_{\max}\sqrt{\log n}$ ensures
$\lambda$ to be sufficiently large so that every non-spike element
is shrunk to zero by the Lasso estimator. It is hard to analyze the
case when parameter $\lambda$ is smaller because in this case it is not clear how to construct $\hat u_{I^c}$ which is critical for tractable KKT analysis. The choice of $\hat u$ in the
Theorem \ref{thm2} is motivated by the proof techniques used in \cite{Candes4}. The proof of Theorem \ref{thm2} is presented in Section \ref{sec:proof2}.

\section{Extensions}\label{sec:ext}
In this section, we provide two interesting extensions to
the AR model problem. First, we generalize AR model to the
autoregressive moving average (ARMA) model, i.e., the process
contains both poles and zeros in the transform function. Second, we
develop an algorithm for the non-causal AR process, i.e.,
the current state not only depends on the past inputs but also
depends on the future inputs.

\subsection{ARMA model}
The ARMA model takes the form
\begin{eqnarray}
x(n)+\sum_{i=1}^p a_i x(n-i)= u(n)+\sum_{i=1}^q b_i u(n-i)
\end{eqnarray}
Again we use Equation \ref{eq:sense} to obtain the measurement $y=Gx$
where $G$ is a Toeplitz matrix as defined in Section
\ref{sec:setup}. Similar to what we have done in Section
\ref{sec:setup}, we write down the matrix representation of the ARMA
model:
\begin{eqnarray}\label{eq:ARMA_basic}
\begin{bmatrix}
x_0 & 0 & \cdots & 0\\
x_1 & x_0 & \cdots &0\\
\vdots & \vdots & \ddots & \vdots\\
x_{p} &x_{p-1} &\cdots & x_0\\
\vdots & \vdots & \vdots & \vdots\\
x_{n-1} & x _{n-2} & \cdots & x_{n-p}
\end{bmatrix}
\begin{bmatrix}
1\\a_1\\\vdots\\a_{p}
\end{bmatrix}
=\begin{bmatrix}
 1 & 0 & 0 & \cdots & 0 \\
b_1 & 1 & 0 & \cdots & 0\\
\vdots & \vdots & \ddots & \ddots & \vdots\\
b_q & \cdots & b_1 & 1& \cdots\\
\vdots & \vdots & \ddots & \ddots & \vdots\\
0 & \cdots & b_{q} & \cdots b_1 & 1
\end{bmatrix}
\begin{bmatrix}
u_0\\
u_1 \\
\vdots\\
u_{n-2}\\ u_{n-1}
\end{bmatrix}
\end{eqnarray}
We denote the lower triangular matrix $B$ as
\begin{eqnarray}\label{eq:S}
B := \begin{bmatrix}
 1 & 0 & 0 & \cdots & 0 \\
b_1 & 1 & 0 & \cdots & 0\\
\vdots & \vdots & \ddots & \ddots & \vdots\\
b_q & \cdots & b_1 & 1& \cdots\\
\vdots & \vdots & \ddots & \ddots & \vdots\\
0 & \cdots & b_{q} & \cdots b_1 & 1
\end{bmatrix}
\end{eqnarray}
By multiplying $G^{[m-p]}$ to both sides of Equation
\ref{eq:ARMA_basic}, we get
\begin{eqnarray}\label{eq:model1}
Ya+y^{[m-p]} = G^{[m-p]}Bu
\end{eqnarray}
Note that for ARMA model we have an additional term $B$ compared to
Equation \ref{eq:model}. Generally, matrix $B$ is unknown. We first
consider a simple situation when $B$ is assumed to be known to the
decoder. Based on Theorem \ref{thm1} we can derive the following result.
.
\begin{thm}[Known Zero Locations] \label{thm:knownzeros}
Given the same technical conditions as Theorem \ref{thm1} and assume
$u^*$ is the original sparse spike train that generates the ARMA
process. Then $u^*$ is the unique minimizer of
\begin{eqnarray}\label{eq:ARMA_cor}
\min_{u\in\mathbb{R}^n,a\in\mathbb{R}^p}\quad \|u\|_1 \,\,\text{
subject to}\quad Ya+y^{[m-p]}=G^{[m-p]}Bu
\end{eqnarray}
\end{thm}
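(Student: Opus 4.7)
My plan is to reduce the ARMA statement to the AR result of Theorem~\ref{thm1} by folding the known MA matrix $B$ into an effective sensing operator. Define $\tilde G := G^{[m-p]}B$. The ARMA constraint $Ya + y^{[m-p]} = G^{[m-p]}Bu$ then reads $Ya + y^{[m-p]} = \tilde G u$, which is formally identical to the AR constraint of Theorem~\ref{thm1}. It therefore suffices to check that each ingredient of Theorem~\ref{thm1}'s proof survives the replacement of $G^{[m-p]}$ by $\tilde G$, while still using only the hypotheses of Theorem~\ref{thm1}.

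First I would check that the structural hypotheses carry over to the composite ARMA impulse response $h = A^{-1}B$. Stability of $A$ and finiteness of $B$ keep $|h(i)|$ exponentially decaying, so Assumption~2 holds with possibly adjusted constants $(M,\rho)$. Since $B$ is lower-triangular with unit diagonal, it is invertible and has spread at most $q+1$; hence, if $u^*$ is $k$-sparse with inter-spike separation $l$, then $Bu^*$ is at most $k(q+1)$-sparse and its non-zero clusters are separated by at least $l-q$. This preserves the ``well-isolated spike'' structure used in Theorem~\ref{thm1}.

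Next I would import the RIP-type bound: since $\tilde G x = G^{[m-p]}(Bx)$ and $Bx$ has support of size at most $k(q+1)$ whenever $x$ is $k$-sparse, RIP of $G^{[m-p]}$ at level $S(q+1)$, together with singular-value bounds for $B$ restricted to such clustered supports, yields an analogous bound for $\tilde G$ at level $S$ with constants depending only on $B$. With this in hand, one repeats the dual-certificate construction of Theorem~\ref{thm1}: construct $\nu$ with $Y^T\nu = 0$ and $(G^{[m-p]}B)^T\nu = \mathrm{sgn}(u^*)$ on $\mathrm{Supp}(u^*)$, while bounded strictly less than one in magnitude off support.

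The main obstacle is this last requirement: verifying $|B^T(G^{[m-p]})^T\nu|_j < 1$ for every $j \notin \mathrm{Supp}(u^*)$. The difficulty is that $B^T$ couples each off-support coordinate to a window of $q+1$ neighbors, which a priori could spoil the exponential-decay/spike-separation interpolation bound inside Theorem~\ref{thm1}. The resolution is that the clusters of $Bu^*$ are still separated by at least $l-q$, and by the definition of $l$ in Assumption~3 this gap remains large enough, relative to $1/\log\rho^{-1}$, for the interpolation terms to stay strictly below one. The sparsity condition $\delta_S/(1-3\delta_S) < 1$ picks up harmless multiplicative constants from $\|B\|$ and $\|B^{-1}\|$ but remains qualitatively unchanged.
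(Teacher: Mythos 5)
Your proposal takes a genuinely different route from the paper, and it has real gaps. The paper's proof is essentially a two-line reduction: since $B$ is Toeplitz (a convolution), it commutes with the Toeplitz matrix $G^{[m-p]}$, so the dual quantity $\pi^TG^{[m-p]}B$ can be rewritten as $\tilde\pi^TG^{[m-p]}$ for a transformed dual vector $\tilde\pi$, and the remaining condition $\pi^TY=0$ likewise becomes $\tilde\pi^TY=0$ because $B^{-1}$ is also Toeplitz and invertible. The KKT/dual-certificate conditions for the ARMA program are therefore \emph{identical} to those already verified in Theorem \ref{thm1}: no new RIP statement, no cluster-sparsity analysis of $Bu^*$, and no modification of the separation constant $l$ is needed. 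Your plan instead treats $\tilde G = G^{[m-p]}B$ as a new sensing operator and tries to re-run the entire certificate construction, which is where the trouble starts.

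Two concrete gaps. First, your resolution of what you correctly identify as the main obstacle --- that the clusters of $Bu^*$ are separated by $l-q$ and that ``by the definition of $l$ in Assumption 3 this gap remains large enough'' --- is unsupported: $l$ is defined as $\left(\log(\frac{2}{1-\rho})+p\log(\frac{6\beta_{\max}M}{\beta_{\min}})\right)/\log(\rho^{-1})+p$, with no dependence on $q$ whatsoever, and the theorem adds no hypothesis relating $l$ to $q$. If $q\geq l$ the clusters of $Bu^*$ merge, and the paper's own counterexample in Section \ref{sec:general} (two adjacent spikes) shows the certificate construction genuinely breaks in that regime; so your argument, as written, would need extra hypotheses the theorem does not contain. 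Second, the RIP transfer from $G^{[m-p]}$ to $G^{[m-p]}B$ requires a \emph{lower} bound on the singular values of $B$ restricted to clustered supports, which can degenerate (e.g., MA zeros near the unit circle) and which nothing in the hypotheses of Theorem \ref{thm1} controls. You also never address the $\pi^TY=0$ part of the certificate --- the sign-pattern and convex-combination argument of Section \ref{sec:general} involving the quantities $A_t$ and $B_t$ --- which also changes under your substitution. All of these difficulties evaporate once one notices that convolutions commute and simply moves $B$ onto the dual variable.
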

\begin{proof}
Note that $B$ is also a Toeplitz matrix. From the commutativity of
Toeplitz matrix, we have $G^{[m-p]}B=BG^{[m-p]}$.  From Section
\ref{sec:proof}, the KKT conditions claim that $u^*$ is the unique
minimizer of Equation \ref{eq:ARMA_cor} if and only if there exists
a vector $\pi$ such that:
\begin{enumerate}
  \item $\left(\pi^TG^{[m-p]}B\right)_i = \text{sgn}(u^*_i)$ for all
  $i\in\text{Supp}(u^*)$,
  \item $|\left(\pi^TG^{[m-p]}B\right)_j|<1$ for all
  $j\not\in\text{Supp}(u^*)$,
  \item $\pi^TY=0$.
\end{enumerate}

Applying the commutativity and define $\tilde\pi^T=\pi^T B$, the
above three conditions are converted to
\begin{enumerate}
  \item $\left(\tilde\pi^TG^{[m-p]}\right)_i = \text{sgn}(u^*_i)$ for all
  $i\in\text{Supp}(u^*)$,
  \item $|\left(\tilde\pi^TG^{[m-p]}\right)_j|<1$ for all
  $j\not\in\text{Supp}(u^*)$,
  \item $\tilde\pi^T B^{-1}Y=0$.
\end{enumerate}
Note that both the inverse $B^{-1}$ and the matrix $Y$ are
Toeplitz. Therefore, from commutativity, the third equation is
equivalent to $\tilde\pi^TYB^{-1}=0$. Finally, since $B^{-1}$ is
invertible, the last equation can be further simplified to
$\tilde\pi^T Y=0$. Now the KKT conditions look exactly the same as
those in Section \ref{sec:proof}. Hence the corollary is proved by
following the same argument as in Section \ref{sec:proof}.
\end{proof}

Now we consider the general situation when $B$ is unknown. The
difficulty of decoding lies in the fact that we know neither $B$ nor
the spike train $u(n)$. There might exist different combinations of
$b_i$ and $u(n)$ that matches the measurements $y(n)$.

Here we propose an iterative algorithm for estimating $(u,a,b)$ in Equation~\ref{eq:model1}. Each iteration comprises of two basic
steps. First, if $B$ is known (from previous iteration), we can use
the following $\ell_1$ minimization algorithm to solve $u$ and $a$
(Theorem~\ref{thm:knownzeros}).
\begin{eqnarray}\label{eq:ARMA_lasso}
\min_{u\in \mathbb{R}^n,a\in \mathbb{R}^p}\|u\|_1 \quad \text{s.t.
}\quad\|Ya+y^{[m-p]} - G^{[m-p]}Bu\|_2\leq \epsilon
\end{eqnarray}
Here $\epsilon>0$ is required, even though there may not be any noise, to ensure that we do not get stuck in a local minima.

Now once $u$ is determined we switch from $u$ to $B$, as the optimization variable. This problem reduces to a standard regression problem. First we rewrite Equation~\ref{eq:model1} as follows:
\[Ya+y^{[m-p]} = G^{[m-p]}\begin{bmatrix}u_0 & 0 & \cdots & 0\\
u_1 & u_0& \cdots & 0\\
\vdots & \vdots & \ddots & \vdots\\
u_{q} & u_{q-1} &\cdots & u_0\\
\vdots & \vdots & \ddots & \vdots\\
u_{n-1} & u_{n-2} & \cdots & u_{n-q-1}
\end{bmatrix}
\begin{bmatrix}
1 \\ b_1 \\ b_2 \\ \vdots \\ b_q\end{bmatrix}
\]
which can be simplified to $Ya+y^{[m-p]} = G^{[m-p]}u+G^{[m-p]}Ub$ where we denote $$U = \begin{bmatrix} 0 & \cdots & 0\\
u_0& \cdots & 0\\
 \vdots & \ddots & \vdots\\
 u_{q-1} &\cdots & u_0\\
 \vdots & \ddots & \vdots\\
 u_{n-2} & \cdots & u_{n-q-1}
\end{bmatrix}$$

Now we formulate the following least squares optimization problem:
\begin{eqnarray}\label{eq:ls}
\min_{b\in\mathbb{R}^q}\|Ya+y^{[m-p]}-G^{[m-p]}u-G^{[m-p]}Ub\|_2\end{eqnarray}

\begin{figure}[t]
\begin{centering}
\begin{minipage}[t]{.48\textwidth}
\includegraphics[width = 1\textwidth]{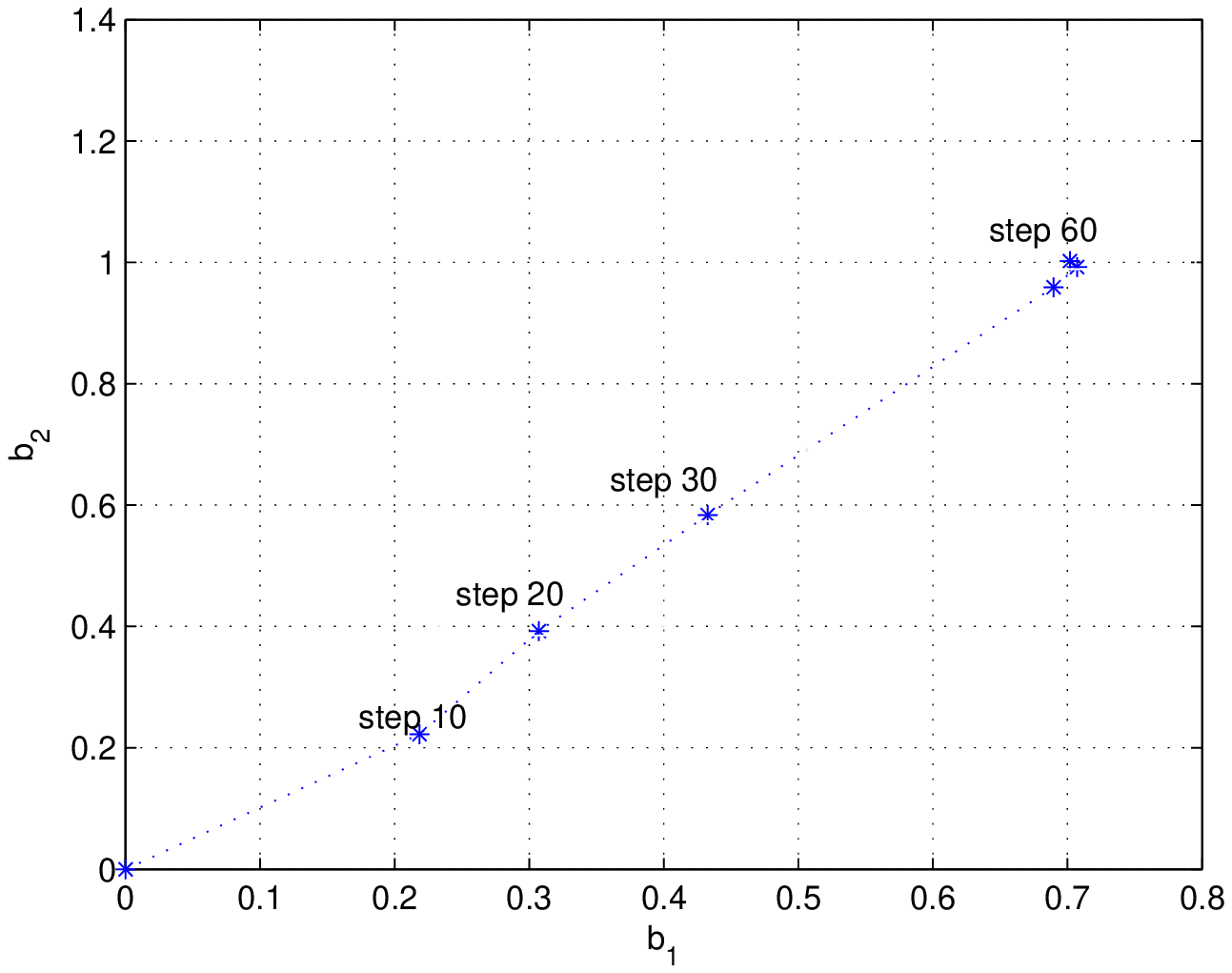}\\
\end{minipage}
\begin{minipage}[t]{.48\textwidth}
\includegraphics[width = 1\textwidth]{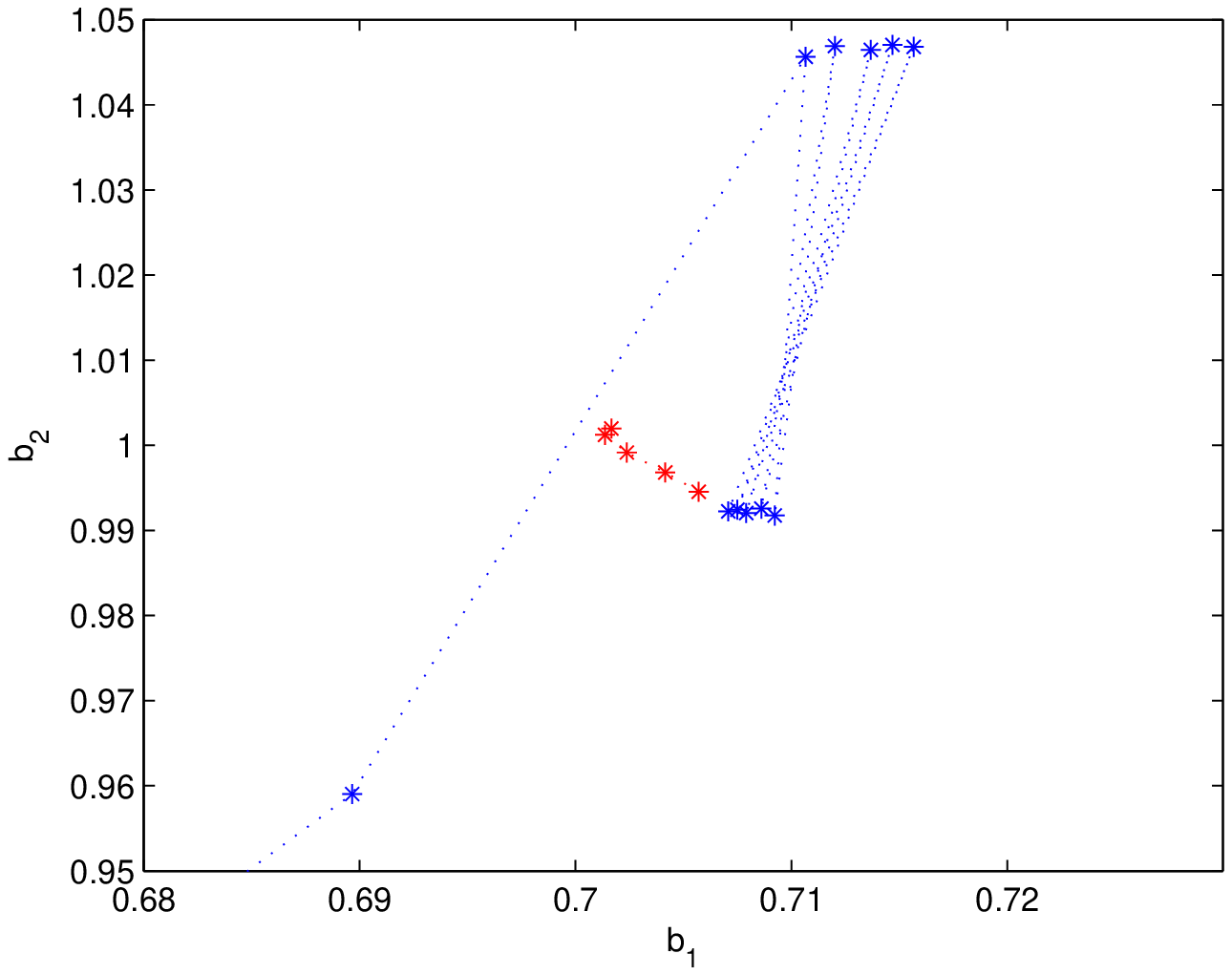}\\
\end{minipage}
\caption{\small The iterative algorithm on the model
$x(n)-1.9x(n-1)+1.06x(n-2)-0.144x(n-3)=u(n)+0.7u(n-1)+u(n-2) $ where
the correct $b=[0.7\,\,1]$. {\em Left: } In trajectory of $\hat b$
in each round of iteration; {\em Right: } Zoom-in of the final
stages of the iterations. Blue $*$ corresponds to the rounds of
updates with $\epsilon=3$ while red $*$  corresponds to the rounds
of updates with a smaller $\epsilon=0.3$ in the final stage.}
\label{fig:locus}
\end{centering}
\end{figure}

In summary our iterative algorithm consists of the following steps:
\begin{description}
  \item[Initialization: ] Set $b^{(0)}=0$, i.e., $B^{(0)} = I$.
  \item[Iteration $k$: ] Compute $u^{(k)},a^{(k)},b^{(k)}$
\begin{enumerate}
  \item Update $u^{(k)}$ and $a^{(k)}$ via solving Equation \ref{eq:ARMA_lasso} with $B=B^{(k-1)}$;
  \item Update $b^{(k)}$ via solving least-square (Equation \ref{eq:ls}) with $(u,a)=(u^{(k)},a^{(k)})$.
\end{enumerate}
\end{description}

There is a subtlety in the choice of parameter $\epsilon$ in Equation
\ref{eq:ARMA_lasso}. If $\epsilon$ is large, the iterative algorithm
appears to have a faster convergence rate but at the cost of significant bias. On the other hand, if $\epsilon$ is
small, the convergence rate is slow but the solution has small bias.  Therefore, in practical implementation we
choose $\epsilon$ to be reasonably large in the early stages of the
iteration and then decrease it to $\epsilon/10$ at the later stages
of the iteration.

Figure \ref{fig:locus} illustrates a concrete example of solving the
ARMA model
$x(n)-1.9x(n-1)+1.06x(n-2)-0.144x(n-3)=u(n)+0.7u(n-1)+u(n-2)$ by
using our iterative algorithm. We choose $\epsilon=3$ in the first
50 rounds of iteration and finally in the last 10 rounds of updates
we set $\epsilon = 0.3$.  Figure \ref{fig:locus}(b) is a zoom-in
version of Figure \ref{fig:locus}(a) which shows the final stage of
the algorithm. We can see the effects of choosing different value of
$\epsilon$ as well.

\subsection{Non-causal AR model}
Many real world signals are non-causal. For example, a 2D image is
usually modeled by a Markov random field, where each pixel is
dependent on all its neighboring pixels. In this subsection we consider this situation by modeling the signal to be a non-causal
AR process.

A non-causal AR model is defined as
\begin{eqnarray}\label{eq:noncausal}
x(n)+\sum_{i=1}^pa_i x(n-i)+\sum_{i=1}^p a_{-i}x(n+i)=u(n)
\end{eqnarray}

\begin{figure}[t]
\begin{centering}
\includegraphics[width = .55\textwidth]{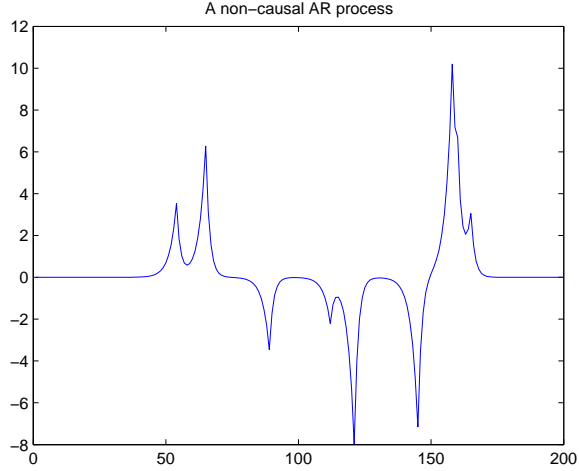}
\caption{\small A typical non-causal Autoregressive process:
$x(n)-0.375x(n-1)-0.5x(n+1)=u(n)$.} \label{fig:noncausal}
\end{centering}
\end{figure}

A typical non-causal AR process is shown in Figure
\ref{fig:noncausal}. Here the impulse response of each spike is
two-sided as opposed to the one-sided impulse response of causal AR
process. In this subsection, we discriminate between two boundary
conditions for the non-causal AR process. As we will show later,
there are subtle differences in dealing with these two boundary
conditions.

\begin{enumerate}
  \item Boundary is circulant, i.e., $x_0=x_n,x_1=x_{n+1},\cdots$;
  \item Boundary is not circulant.
\end{enumerate}

\subsubsection{Circulant Boundary}
In this case we use the following circulant matrix in the sensing
model $y=Gx$.
\begin{eqnarray}
G =
\begin{bmatrix}
g_{n-m}& g_{n-m-1}&\cdots & g_0 & g_{n-1} & g_{n-2}&\cdots & g_{n-m+1}\\
g_{n-m+1}& g_{n-m} & \cdots & g_1 & g_0  & g_{n-1} & \cdots &
g_{n-m}\\
\vdots & \vdots & \ddots & \vdots & \vdots & \ddots &\ddots &
\vdots\\
g_{n-1} & g_{n-2} & \cdots& g_{m-1} & g_{m-2} & g_{m-3}&\cdots  &
g_0
\end{bmatrix}
\in \mathbb{R}^{n\times m}
\end{eqnarray}
where $g_i$ is i.i.d Gaussian random variable $\mathcal{N}(0,1)$ or
Bernoulli $\pm 1$ random variable.

Since the boundary of $x$ is circulant ($x_{-i}=x_{n-i}$), we can
write the matrix representation of Equation \ref{eq:noncausal} as
\begin{eqnarray}\label{eq:nonmat}
\begin{bmatrix}
x_0 & x_{n-1} & \cdots & x_{n-p} & x_{1} & \cdots & x_{p}\\
x_1 & x_0 & \cdots & x_{n-p+1} & x_{2} & \cdots & x_{p+1}\\
\vdots & \vdots & \ddots & \vdots & \vdots & \ddots & \vdots\\
x_{p} &x_{p-1} &\cdots & x_0 &x_{p+1} &\cdots & x_{2p}\\
\vdots & \vdots & \vdots & \vdots& \vdots & \vdots & \vdots\\
x_{n-1} & x _{n-2} & \cdots & x_{n-p}& x _{0} & \cdots & x_{p-1}
\end{bmatrix}
\begin{bmatrix}
1\\a_1\\\vdots\\a_{p}\\a_{-1}\\\vdots\\a_{-p}
\end{bmatrix}
=\begin{bmatrix}
u_0\\
u_1 \\
\vdots\\
u_{p}\\
\vdots\\
u_{n-1}
\end{bmatrix}
\end{eqnarray}

With an abuse of notation, we use $G^{[i:j]}$ to denote
the submatrix of $G$ comprising rows $i$-th through $j$-th
of $G$. Now we multiply $G^{[p+1:m-p]}$ to both sides of
Equation \ref{eq:nonmat} we get the following equation.
\begin{eqnarray}\label{eq:nonbasic}
\begin{bmatrix}
y_{p} & y_{p-1} & \cdots & y_{0} & y_{p+1} & \cdots & y_{2p}\\
y_{p+1} & y_{p} & \cdots & y_{1} & y_{p+2} & \cdots & y_{2p+1}\\
\vdots & \vdots & \ddots & \vdots & \vdots & \ddots & \vdots\\
y_{2p} &y_{2p-1} &\cdots & y_{p} & y_{2p+1} &\cdots & y_{3p}\\
\vdots & \vdots & \vdots & \vdots& \vdots & \vdots & \vdots\\
y_{n-p} & y _{n-p-1} & \cdots & y_{n-2p}& y _{n-p+1} & \cdots &
y_{n}
\end{bmatrix}
\begin{bmatrix}
1\\a_1\\\vdots\\a_{p}\\a_{-1}\\\vdots\\a_{-p}
\end{bmatrix}
=G^{[p+1:m-p]}u
\end{eqnarray}
We define matrix $\tilde Y$ to be
\begin{eqnarray*}
\tilde Y=
\begin{bmatrix}
 y_{p-1} & \cdots & y_{0} & y_{p+1} & \cdots & y_{2p}\\
 y_{p} & \cdots & y_{1} & y_{p+2} & \cdots & y_{2p+1}\\
\vdots & \ddots & \vdots & \vdots & \ddots & \vdots\\
y_{2p-1} &\cdots & y_{p} & y_{2p+1} &\cdots & y_{3p}\\
\vdots & \vdots & \vdots& \vdots & \vdots & \vdots\\
 y _{n-p-1} & \cdots & y_{n-2p}& y _{n-p+1} & \cdots &
y_{n}
\end{bmatrix}\end{eqnarray*}
and finally Equation \ref{eq:nonbasic} is simplified to
\begin{eqnarray}
y^{[p+1:m-p]}+\tilde Y a = G^{[p+1:m-p]}u
\end{eqnarray}
where $a = [a_1,\cdots,a_p,a_{-1},\cdots,a_{-p}]^T\in
\mathbb{R}^{2p}$.

As in Section~\ref{sec:alg} we can use either $\ell_1$-minimization or Lasso to solve this problem.
\begin{eqnarray*}
\ell_1\text{-minimization:}\quad
\min_{u\in\mathbb{R}^n,a\in\mathbb{R}^{2p}}\|u\|_1\quad
\text{s.t.}\quad y^{[p+1:m-p]}+\tilde Y a = G^{[p+1:m-p]}u\\
\text{Lasso:}\quad
\min_{u\in\mathbb{R}^n,a\in\mathbb{R}^{2p}}\dfrac{1}{2}\|y^{[p+1:m-p]}+\tilde
Y a - G^{[p+1:m-p]}u\|_2^2+\lambda\|u\|_1
\end{eqnarray*}

\subsubsection{Non-circulant Boundary}

The case of non-circulant boundary is slightly more complicated.
There are two ways of handling this situation. A simple approach is to
view the problem as a perturbation of the circulant boundary case, namely,
\[y^{[p+1:m-p]}+\tilde Y a +e= G^{[p+1:m-p]}u\]
where
\begin{eqnarray*}
e = G^{[p+1:m-p]}\begin{bmatrix}
x_{-1}-x_{n-1} & \cdots & x_{-p}-x_{n-p} & 0 & \cdots &0\\
 0 & \cdots & x_{-p+1}-x_{n-p+1} & 0 & \cdots & 0\\
 \vdots & \ddots & \vdots & \vdots & \ddots & \vdots\\
0 &\cdots & 0 & 0 &\cdots & x_{n+p-2}-x_{p-2}\\
 0 & \cdots & 0& x_n-x _{0} & \cdots & x_{n+p-1}-x_{p-1}
\end{bmatrix}
a
\end{eqnarray*}

Now one could use Lasso to solve this noisy model:
\begin{eqnarray*}
\min_{u\in\mathbb{R}^n,a\in\mathbb{R}^{2p}}\dfrac{1}{2}\|y^{[p+1:m-p]}+\tilde
Y a - G^{[p+1:m-p]}u\|_2^2+\lambda\|u\|_1
\end{eqnarray*}

Unfortunately, this approach will have a bias. To overcome this
limitation, we consider the case where we can make an additional $2p$ set of
measurements corresponding to the boundary conditions, namely,
$$y_{m+1}=x_{-p},\,\cdots,\,y_{m+p}=x_{-1},\,y_{m+p+1}=x_{n-p},\,\cdots,\,y_{m+2p}=x_{n-1}.$$
Then by the denoting $$\bar Y := \tilde
Y+G^{[p+1:m-p]}\begin{bmatrix}
x_{-1}-x_{n-1} & \cdots & x_{-p}-x_{n-p} & 0 & \cdots &0\\
 0 & \cdots & x_{-p+1}-x_{n-p+1} & 0 & \cdots & 0\\
 \vdots & \ddots & \vdots & \vdots & \ddots & \vdots\\
0 &\cdots & 0 & 0 &\cdots & x_{n+p-2}-x_{p-2}\\
 0 & \cdots & 0& x_n-x _{0} & \cdots & x_{n+p-1}-x_{p-1}
\end{bmatrix}$$
the sensing model can be simplified to the noiseless version
\[y^{[p+1:m-p]}+\bar Y a= G^{[p+1:m-p]}u\]

Again we can use either $\ell_1$-minimization or Lasso to solve this
model:
\begin{eqnarray*}
\ell_1\text{-minimization:}\quad
\min_{u\in\mathbb{R}^n,a\in\mathbb{R}^{2p}}\|u\|_1\quad
\text{s.t.}\quad y^{[p+1:m-p]}+\bar Y a = G^{[p+1:m-p]}u\\
\text{Lasso:}\quad
\min_{u\in\mathbb{R}^n,a\in\mathbb{R}^{2p}}\dfrac{1}{2}\|y^{[p+1:m-p]}+\bar
Y a - G^{[p+1:m-p]}u\|_2^2+\lambda\|u\|_1
\end{eqnarray*}

\section{Proof of Theorem \ref{thm1}}\label{sec:proof}

We first write down the primal and dual formulation of algorithm
\ref{eq:form1}.
\begin{eqnarray}\label{eq:primal}
\min_{u\in\mathbb{R}^n,a\in\mathbb{R}^p}\quad \|u\|_1 \,\,\text{
subject to}\quad Ya+y^{[m-p]}=G^{[m-p]}u
\end{eqnarray}
whose \emph{dual} formualtion is:
\begin{eqnarray}\label{eq:dual}
\max_{\pi\in\mathbb{R}^m}\quad \pi^Ty^{[m-p]}\,\text{ subject
to}\quad \|\pi^TG^{[m-p]}\|_\infty\leq 1,\,\pi^TY=0
\end{eqnarray}

The proof is based on duality. $u^*$ is the unique minimizer of the
primal problem \ref{eq:primal} if we can find a dual vector $\pi$
with the following properties:
\begin{enumerate}
  \item $\left(\pi^TG^{[m-p]}\right)_i = \text{sgn}(u^*_i)$ for all
  $i\in\text{Supp}(u^*)$,
  \item $|\left(\pi^TG^{[m-p]}\right)_j|<1$ for all
  $j\not\in\text{Supp}(u^*)$,
  \item $\pi^TY=0$.
\end{enumerate}
where $\text{sgn}(u^*_i)$ denotes the sign of $u^*_i$
($\text{sgn}(u^*_i)=0$ for $u^*_i=0$) and $\text{Supp}(u^*)$ denotes
the support of vector $u^*$. The above set of conditions ensure that
the primal-dual pair $(u^*,\pi)$ is not only feasible but also
satisfy the complementary slackness condition, thus optimal. We call
the above three conditions as the \emph{Dual Optimal Condition}
(DOC).

The rest of this section is to construct a $\pi$ that satisfies the
DOC. Our construction relies on the following result (see
\cite{Candes5}).
\begin{lem}[\cite{Candes5}]\label{lem:free}
Let $S\geq 1$ be such that $\delta_{2S}\leq\frac{1}{3}$, and $c$ be
a real vector supported on $T$ obeying $|T|\leq S$. Then there
exists a vector $\pi\in \mathbb{R}^{m}$ such that
$\left(\pi^TG^{[m-p]}\right)_i = c_i\,\forall i\in T$. Furthermore,
$\pi$ obeys
\[\left|\left(\pi^TG^{[m-p]}\right)_j\right|\leq \frac{\delta_S}{(1-3\delta_{2S})\sqrt{S}}\cdot \|c\|_2\quad\forall j\not\in T\]
\end{lem}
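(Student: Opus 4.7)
The plan is to construct the certificate $\pi$ explicitly as an element of the column span of $G^{[m-p]}_T$ (in the sense of columns indexed by $T$), and then to control its inner products with columns indexed by $T^c$ using the RIP. Write $A := G^{[m-p]}$ for brevity and let $A_T$ be the submatrix of columns in $T$. Since $\delta_S \leq \delta_{2S} \leq 1/3$, the Gram matrix $A_T^T A_T$ has spectrum contained in $[1-\delta_S, 1+\delta_S]$, so it is invertible, $\|(A_T^T A_T)^{-1}\|_{op} \leq (1-\delta_S)^{-1}$, and it admits the convergent Neumann series $(A_T^T A_T)^{-1} = \sum_{k\geq 0}(I - A_T^T A_T)^k$.

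I would take the natural candidate
\[
\pi := A_T (A_T^T A_T)^{-1} c_T ,
\]
where $c_T$ is the restriction of $c$ to $T$. The on-support condition is immediate from the definition: $A_T^T \pi = (A_T^T A_T)(A_T^T A_T)^{-1} c_T = c_T$, so $(\pi^T A)_i = c_i$ for every $i\in T$. All the work is in the off-support estimate. For $j\notin T$ one has
\[
(\pi^T A)_j \;=\; c_T^T (A_T^T A_T)^{-1} A_T^T a_j ,
\]
and the strategy is to bound this scalar via two RIP consequences. First, since $T$ and $\{j\}$ are disjoint and $|T|+1 \leq S+1 \leq 2S$, applying the RIP to the set $T\cup\{j\}$ controls the off-diagonal block of $A_{T\cup\{j\}}^T A_{T\cup\{j\}}$ and yields $\|A_T^T a_j\|_2 \leq \delta_{|T|+1} \leq \delta_{2S}$. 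Second, the Neumann expansion lets one write $(A_T^T A_T)^{-1} = I + \Xi$ with $\|\Xi\|_{op} \leq \delta_S/(1-\delta_S)$, which when combined with Cauchy--Schwarz produces the off-support inequality in the stated form.

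The main obstacle is bookkeeping the constants. A direct Cauchy--Schwarz plus the two RIP bounds above gives immediately $|(\pi^T A)_j| \leq \delta_{S+1}(1-\delta_S)^{-1}\|c\|_2$, which is of the right qualitative shape but not the exact $\delta_S/\bigl((1-3\delta_{2S})\sqrt{S}\bigr)$ stated in the lemma. Matching the precise constant requires a sharper accounting: a one-step refinement of the Neumann expansion that replaces the denominator $1-\delta_S$ by $1-3\delta_{2S}$, together with the fact that $T$ has cardinality at most $S$, which allows one to redistribute a $\sqrt{S}$ factor between an $\ell_2$ bound on $c$ and a per-coordinate estimate on $A_T^T a_j$. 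I would follow the detailed argument in \cite{Candes5} to line up the RIP constants and reach exactly the stated inequality.
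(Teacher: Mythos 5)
Your construction $\pi = A_T(A_T^TA_T)^{-1}c_T$ and the verification of the on-support condition are correct, and this is indeed the standard dual-certificate route; note, however, that the paper itself gives no proof of this lemma at all --- it imports the statement from \cite{Candes5} --- so there is no internal argument to match against. Judged as a self-contained proof, the gap is in the off-support estimate, and it is not, as you suggest, a matter of bookkeeping constants. Cauchy--Schwarz together with $\|(A_T^TA_T)^{-1}\|_{op}\leq(1-\delta_S)^{-1}$ and $\|A_T^Ta_j\|_2\leq\delta_{S+1}$ gives exactly what you computed, $|(\pi^TA)_j|\leq\frac{\delta_{S+1}}{1-\delta_S}\|c\|_2$, and this is weaker than the stated bound by a factor of order $\sqrt{S}$ (the discrepancy between $1-\delta_S$ and $1-3\delta_{2S}$ is harmless --- the latter denominator is smaller --- so the $\sqrt{S}$ is the entire issue). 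No redistribution of a $\sqrt{S}$ between $\|c\|_2$ and a per-coordinate estimate on $A_T^Ta_j$ can rescue a \emph{uniform} pointwise bound: for a generic matrix satisfying RIP, $\|A_T^Ta_j\|_2$ is genuinely of order $\delta_S$ and not $\delta_S/\sqrt{S}$, so the one-shot certificate simply does not satisfy the inequality as stated for every $j\notin T$.

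The reason you cannot ``follow the detailed argument in \cite{Candes5} to line up the constants'' is that the lemma proved there is weaker than the version quoted here: Cand\`es--Romberg--Tao obtain the $\sqrt{S}$-improved pointwise bound only for $j$ outside an exceptional set $E$, disjoint from $T$ and of cardinality at most $2S$, and on $E$ they control only the $\ell_2$ norm of the vector $\bigl((\pi^TA)_j\bigr)_{j\in E}$; the exceptional set arises from a pigeonhole argument applied to each term of the Neumann/iteration series and is unavoidable. The statement as reproduced in this paper silently drops $E$. So to complete your proof you must either (a) accept your bound $\delta_{S+1}(1-\delta_S)^{-1}\|c\|_2$, which is correct but costs a factor $\sqrt{S}$ when the lemma is later invoked with $\|c\|_2\leq\sqrt{3k}$ to conclude $|(\pi^TA)_j|<1$ (turning the sparsity condition $k\lesssim S$ into $k\lesssim 1/\delta^2$), or (b) reproduce the exceptional-set version and verify that the downstream argument in Section 5 tolerates having only an $\ell_2$ bound on a set of up to $2S$ coordinates. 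Either way, the inequality exactly as stated is not reachable by the proposed argument.
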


This lemma gives us the freedom to choose (arbitrarily) the value of
$\pi^TG^{[m-p]}$ in the location of $T$ while the magnitude of the
rest components  is still bounded.

\subsection{One Pole Case}
In this section we provide a proof for the simple case when $x(n)$
is a first order AR process (i.e., $p=1$) and $u^*$ only contains
one spike (i.e., every entry of $u^*$ is zero except one place).
Though simple, it contains the main idea of proof techniques for the
more general case. Note that in this simple case the assumptions in
Theorem \ref{thm1} are automatically satisfied.

For the $1$-sparse driving process $u^*$, without loss of any
generality we assume $u^*_0=1$ and $u^*_i=0\,(\forall i\geq 1)$. We
also denote $\alpha = -a$ as the root of the characteristic function
of the first order AR process. Due to stability we have
$|\alpha|<1$. Now in condition 3 of DOC, the term $\pi^T Y$ can be
recast as
\[\pi^T Y =
\pi^TG^{[m-p]}\begin{bmatrix}0\\x_0\\\vdots\\x_{n-2}\end{bmatrix}=\pi^TG^{[m-p]}\begin{bmatrix}0\\1
\\ \vdots\\\alpha^{n-2}\end{bmatrix}\]

In Lemma \ref{lem:free}, we choose $c$ as $c_0=1$, $c_1=1/2$ and
$c_j=0$ ($j =2,\cdots,S-1$). Then Lemma \ref{lem:free} tells us that
there exists a $\pi_1$ such that $\left(\pi_1^TG^{[m-p]}\right)_i =
c_i\,(\forall i=0,\cdots, S-1)$ and furthermore
\[\left|\left(\pi_1^TG^{[m-p]}\right)_j\right|\leq \frac{\delta_S}{(1-3\delta_{2S})\sqrt{S}}\cdot \sqrt{1+1/4}\leq \frac{2}{\sqrt{S}},\quad\forall j\geq S\]

This implies \begin{eqnarray}\pi_1^T Y
=\pi_1^TG^{[m-p]}\begin{bmatrix}0\\1
\\ \vdots\\\alpha^{n-2}\end{bmatrix}=\frac{1}{2}+\sum_{j=S}^{n-1}\left(\pi_1^TG^{[m-p]}\right)_j\alpha^{j-1}\end{eqnarray}
where the summation
$\left|\sum_{j=S}^{n-1}\left(\pi_1^TG^{[m-p]}\right)_j\alpha^{j-1}\right|\leq
\frac{2|\alpha|^{S-1}}{\sqrt{S}(1-|\alpha|)}\ll\frac{1}{2}$.
Therefore $\text{sgn}(\pi_1^T Y)=1$. To summarize the above
discussion, we find $\pi_1$ such that:
\begin{enumerate}
  \item $\left(\pi_1^TG^{[m-p]}\right)_0 = 1$
  \item $|\left(\pi_1^TG^{[m-p]}\right)_j|<1$ for all
  $j\geq1$,
  \item $\text{sgn}(\pi_1^TY)=1$.
\end{enumerate}

Similarly, by choosing $c_0=1$, $c_1=-1/2$ and $c_j=0$ ($j
=2,\cdots,S-1$) in Lemma \ref{lem:free}, there exists a $\pi_2$ such
that condition 1 and 2 of DOC are also satisfied while
$\text{sgn}(\pi_2^T Y)=-1$. Hence, by convexity there exists a
$\lambda\in(0,1)$ such that for $\pi =
\lambda\pi_1+(1-\lambda)\pi_2$, it satisfies $\pi^TY=0$ and also
condition 1 and 2, i.e., the whole DOC.

Finally we find a primal-dual pair $(u^*,\pi)$ that satisfy all the
feasible constraints and also the complementary slackness condition,
which implies $u^*$ is the unique minimizer of the primal problem
equation \ref{eq:primal}.

\subsection{General Case}\label{sec:general}
In this section we prove that in the general case the three
conditions in Theorem \ref{thm1} ensures the existence of a $\pi$
that satisfies the DOC. Before giving the proof, we point out that
if some conditions in Theorem \ref{thm1} are violated, there might
not exist such a $\pi$. Let us consider the case of $p=1$ (first
order AR process) and $k$=2 (only two entries of $u(n)$ are
nonzero). Moreover, we choose $u^*_0=u^*_1=1$ and $u^*_i=0\,(\forall
i>1)$, that is, the two spikes are next to each other.

In this case
$[x_0,x_1,\cdots,x_{n-1}]^T=[1,1+\alpha,\alpha(1+\alpha),\cdots,\alpha^{n-2}(1+\alpha)]^T$.
We pick $\alpha = -1/2$. Clearly the assumption $|i-i'|>l,\forall
i,i'\in\text{Supp}(u^*)$ in Theorem \ref{thm1} is broken. On the
other hand, we can also check that there does not exist a $\pi$ that
satisfies the whole DOC condition. In fact, suppose $\pi$ is chosen
such that condition 1 and 2 are satisfied, then in checking
condition 3 we find

\begin{eqnarray*}\pi^T Y
=\pi^TG^{[m-p]}\begin{bmatrix}0\\1\\1+\alpha\\
\\ \vdots\\\alpha^{n-3}(1+\alpha)\end{bmatrix}=1+\sum_{j=2}^{n-1}\left(\pi_1^TG^{[m-p]}\right)_j\alpha^{j-2}(1+\alpha)
\geq 1-\sum_{j=2}^{n-1}|\alpha|^{j-2}(1-|\alpha|)>0\end{eqnarray*}
which violates condition 3 in DOC. Hence there does not exist a
$\pi$ that satisfies all the three conditions in DOC.

Before proving Theorem \ref{thm1}, we need the following lemma in
constructing $\pi$.
\begin{lem}\label{lem:cond}
Suppose the assumptions in Theorem \ref{thm1} are satisfied. Denote
$T = \{j+i:j\in \text{Supp}(u^*),0\leq i\leq l\}$. Then $|T|\leq S$
and we also have the following inequalities:
\begin{description}
  \item[(1)] $\forall j\not\in T$ and $i=0,1,\cdots,p$, $|x_{j-i}|< \frac{\beta_{\max}M\rho^{l-p}}{1-\rho^l}$,
  \item[(2)] $\forall k\in\cup_{i=1}^p\{j-i:j\in \text{Supp}(u^*)\}$, $|x_k|< \frac{\beta_{\max}M\rho^{l-p}}{1-\rho^l}$,
  \item[(3)] $\forall i=0,1,\cdots,p-1$ and $\forall j\in\text{Supp}(u^*), |x_{j}/x_{j+i}|\geq r$ where $r:=\frac{\beta_{\min}(1-\rho^l)}{\beta_{\max}M}-\rho^l$.
\end{description}
\end{lem}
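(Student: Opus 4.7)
The approach rests on the convolution identity $x_n = \sum_{\tau \in \text{Supp}(u^*),\, \tau \leq n} u^*_\tau \, h(n-\tau)$, which follows from causality of the AR filter driven by the spike train $u^*$. Combined with the exponential decay $|h(i)| \leq M\rho^i$, the spike separation $|i-j| > l$, and the amplitude bounds $\beta_{\min} \leq |u^*_\tau| \leq \beta_{\max}$, every estimate in the lemma will reduce to a geometric-series bound over contributions from spikes at various distances.

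I first handle the cardinality of $T$. Since $T = \bigcup_{j \in \text{Supp}(u^*)} \{j, j+1, \ldots, j+l\}$ is a union of $k$ blocks of length $l+1$, and consecutive blocks are pairwise disjoint by the spike separation, we get $|T| = k(l+1)$, which under the sparsity assumption $k \leq S/l$ yields the stated bound.

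For (1) and (2) the key structural observation is uniform: in (1), fix $j \notin T$ and $0 \leq i \leq p$; being outside $T$ forces every spike $\tau \leq j$ to satisfy $j - \tau \geq l+1$, so for any $\tau$ contributing to $x_{j-i}$ we get $j - i - \tau \geq l + 1 - p$. In (2), with $k = j - i$ for $j \in \text{Supp}(u^*)$ and $1 \leq i \leq p$, causality excludes the spike at $j$ itself (since $k < j$), and every other contributing spike $\tau$ is a different element of $\text{Supp}(u^*)$, hence $j - \tau \geq l+1$, giving $k - \tau \geq l + 1 - p$ again. In both settings I order the contributing spikes by proximity to $j$; consecutive gaps exceed $l+1$ by spike separation, so the magnitudes are dominated by the geometric tail $\beta_{\max} M \sum_{s \geq 1} \rho^{s(l+1) - p}$, which simplifies to the claimed bound (up to the mild relaxation from $\rho^{l+1-p}/(1-\rho^{l+1})$ to $\rho^{l-p}/(1-\rho^l)$).

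For (3), I decompose $x_j = u^*_j \, h(0) + R_0$ and $x_{j+i} = u^*_j \, h(i) + R_i$, where $R_0, R_i$ collect contributions from other spikes. No spike $\tau > j$ contributes, since such $\tau$ would require $\tau - j \geq l+1 > p > i$ and hence $\tau > j+i$. Thus $R_0, R_i$ involve only spikes strictly earlier than $j$, each at distance $\geq l+1$, so the same geometric-tail argument bounds $|R_0|, |R_i| \leq \beta_{\max} M \rho^{l+1}/(1-\rho^{l+1})$. Using the AR normalization $h(0) = 1$ gives $|x_j| \geq \beta_{\min} - |R_0|$, while $|h(i)| \leq M$ gives $|x_{j+i}| \leq \beta_{\max} M + |R_i|$. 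Taking the ratio and simplifying algebraically recovers the form of $r$. The main obstacle here is careful bookkeeping of exponents so that the stated constants match; the value of $l$ in assumption 3 was engineered precisely so that $|R_0| < \beta_{\min}$, thereby keeping $r > 0$ and closing the argument.
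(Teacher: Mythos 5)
Your proposal is correct and takes essentially the same route as the paper's own proof: isolate the contribution of the nearest spike, bound everything else by a geometric tail using $|h(i)|\le M\rho^i$ and the separation $|i-j|>l$, and for (3) use $h(0)=1$ to lower-bound $|x_j|$ by $\beta_{\min}$ minus that tail (your constants are marginally tighter, with $\rho^{l+1}$ in place of the paper's $\rho^{l}$, which only strengthens the conclusion). The one blemish --- that $|T|\le k(l+1)$ together with $k\le S/l$ does not literally give $|T|\le S$ --- is inherited verbatim from the paper's own one-line justification, so no new gap is introduced.
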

\begin{proof}
First, from the assumption of Theorem \ref{thm1}, $|T|\leq S$. Then
we need to verify the three properties.

Suppose $u_k$ is a new spike and $k'$ be the next spike. Given
$i<(k'-k)$, we clearly have
\[|x_{k+i}|\leq \beta_{\max}M\rho^i(1+\rho^l+\rho^{2l}+\cdots)\leq \frac{\beta_{\max}M\rho^i}{1-\rho^l}\]

Hence properties (1) and (2) follow.

We denote $\epsilon := \frac{\beta_{\max}M}{1-\rho^l}$. Therefore,
for any $j\in\text{Supp}(u^*)$, $|x_j|>|u_j|-\epsilon\rho^l\geq
\beta_{min}-\epsilon\rho^l$. Combining with the above argument, we
have
\[|x_j/x_{j+i}|\geq (\beta_{\min}-\epsilon\rho^l)/\epsilon=\frac{\beta_{\min}(1-\rho^l)}{\beta_{\max}M}-\rho^l\]

Note that when $\rho^l\leq \frac{\beta_{\min}}{3\beta_{\max}M}$ as
given in the theorem assumption, we have
$r\geq\frac{\beta_{\min}}{3\beta_{\max}M}$.
\end{proof}
\noindent {\bf Remark:} Property (1) in Lemma \ref{lem:cond} says
that many components of $x(n)$ are small. Property (2) ensures that
before a new `spike' $u_j$ begins ($j\in\text{Supp}(u)$), the
amplitude of $x_{j-p},\cdots,x_{j-1}$ is already negligible (i.e.,
very close to zero) such that the new impulse response caused by
$u_j$ can be regarded as starting almost from zero level. Finally, property (3) says that when a new spike $u_j$ arrives,
the corresponding output $x_j$ is reasonably large compared to its neighbors.\\

Now we are ready to prove Theorem \ref{thm1}. Similar to the last
section's argument, the objective is to find a sequence of vectors
$\pi_1,\cdots,\pi_{2^p}$ such that any of $\pi_s (s=1,\cdots,2^p)$
satisfies the condition 1 and 2 of DOC while
\begin{eqnarray*}\label{eq:sgn}
\text{sgn}(\pi_1^TY)&=& [1,1,\cdots,1]^T\\
\text{sgn}(\pi_2^TY) &=& [-1,1,\cdots,1]^T\\
\vdots\\
\text{sgn}(\pi_{2^p}^TY)&=& [-1,-1,\cdots,-1]^T
\end{eqnarray*}
and this implies there exists a convex combination $\pi
=\sum_{s=1}^p \lambda_s \pi_s$ which satisfies $\pi^TY=0$ and also
the condition 1 and 2 of DOC.

Based on Lemma \ref{lem:free}, we construct $\pi_1$ via fixing the
values of $\{(\pi_1^TG^{[m-p]})_i\}_{i\in T}:=\{c_i\}_{i\in T}$:
\begin{eqnarray}\label{eq:pickup_c}
  c_i =
  \begin{cases}
  \text{sgn}(u^*_i)& \text{if }i\in
  \text{Supp}(u^*)\\
  (r/2)^{i-j-1}\text{sgn}(x_j) & \text{if }i=j+1,\cdots,j+p,\forall j\in\text{supp}(u^*)\\
  0 & \text{if }i=j+p+1,\cdots,j+l,\forall j\in\text{supp}(u^*)
 \end{cases}
\end{eqnarray}

This choice of $c$ gives the bound $\|c\|_2<
\sqrt{k+k(1+2^{-1}+2^{-2}+\cdots)}\leq \sqrt{3k}$. Now by applying
Lemma \ref{lem:free}, we know there exists a $\pi_1$ such that
$\left(\pi_1^TG^{[m-p]}\right)_i=c_i$ when $i\in T$ and
\begin{eqnarray*}
\left|\left(\pi_1^TG^{[m-p]}\right)_j\right|<
\frac{\|c\|_2}{\sqrt{S}}\leq\sqrt{\frac{3k}{S}}\leq1,\quad \forall
j\not\in T
\end{eqnarray*}
where the last inequality follows from the assumption of the
Theorem. Up to now we have shown that $\pi_1$ satisfies condition 1
and 2 of DOC. Next we will check the sign of $\pi_1^T Y$.

For $t = 1,2,\cdots,p$,
\begin{eqnarray*}
(\pi_1^T Y)_{t}
&=&\sum_{j=j_0+1}^{j_0+p}\left(\pi_1^TG^{[m-p]}\right)_j
x_{j-t}+\sum_{j\not\in T\text{ or
}j\in\text{Supp}(u^*)}\left(\pi_1^TG^{[m-p]}\right)_j x_{j-t}\\
&=&\sum_{j_0\in\text{Supp}(u^*)}\sum_{j=j_0+t}^{j_0+p}c_jx_{j-t}+\left(\sum_{j_0\in\text{Supp}(u^*)}\sum_{j=j_0+1}^{j_0+t-1}c_j
x_{j-t}+\sum_{j\not\in T\text{
or }j\in\text{Supp}(u^*)}c_j x_{j-t}\right)\\
&\overset{\Delta}{=}& A_t+B_t
\end{eqnarray*}
where the magnitude of $A_t$ can be lower bounded,
\[|A_t|\geq \sum_{j_0\in\text{Supp}(u^*)}(r/2)^{t-1}\beta_{min}(1-2^{-1}-2^{-2}-\cdots-2^{-(p-t)})\geq k\beta_{\min}(r/2)^{p-1}\]
based on property (3) of Lemma~\ref{lem:cond}. And the magnitude of
$B_t$ is upper bounded,
$$|B_t|< \sum_{j_0\in\text{Supp}(u^*)} \frac{\beta_{\max}M\rho^{l-p}}{1-\rho^l}(1+\rho+\rho^2+\cdots)= k\frac{\beta_{\max}M\rho^{l-p}}{(1-\rho^l)(1-\rho)}$$
When $l\geq \left(\log(\frac{2}{1-\rho})+p\log
(\frac{6\beta_{\max}M}{\beta_{\min}})\right)/\log(\rho^{-1})+p$ as
given by the assumption of the theorem, we have $|B_t|<|A_t|$, which
implies that the sign of $(\pi_1^T Y)_{t}$ is determined by the sign
of $A_t$.

Hence $\text{sgn}((\pi_1^T Y)_t)=\text{sgn}(A_t)=1$. This implies
\[\text{sgn}(\pi_1^T Y)=[1,1,\cdots,1]^T\]

In general, for any sign pattern $[s_1,\cdots,s_p]^T
(s_i\in\{-1,1\})$, by choosing $\{c_i\}_{i\in T}$ (compare equation
\ref{eq:pickup_c}) in the following way
\begin{eqnarray*}
  c_i =
  \begin{cases}
  \text{sgn}(u^*_i)& \text{if }i\in
  \text{Supp}(u^*)\\
  s_i\cdot (r/2)^{i-j-1}\text{sgn}(x_j) & \text{if }i=j+1,\cdots,j+p,\forall j\in\text{supp}(u^*)\\
  0 & \text{if }i=j+p+1,\cdots,j+l,\forall j\in\text{supp}(u^*)
 \end{cases}
\end{eqnarray*}
and making similar arguments, we have
\[\text{sgn}(\pi_s^T Y)=[s_1,s_2,\cdots,s_p]^T\]

\section{Proof of Theorem \ref{thm2}}\label{sec:proof2}
To prove Theorem \ref{thm2}, we only need to check that $(\hat
u,\hat a)$ given in the theorem satisfy the KKT conditions. We
denote the function
$f(u,a)=\frac{1}{2}\|y+Ya-u\|_2^2+\lambda\|u\|_1$. Then the gradient
of $f$ with respect to $a$ is
\[\frac{\partial f}{\partial a}=Y^T(y+Ya-u)\]
and the subgradient of $f$ with respect to $u$ is
\[\frac{\partial f}{\partial u}=-(y+Ya-u)+\lambda v\]
where $v$ satisfies $v_i=\text{sgn}(u_i)$ for $i\in I$ and $|v_i|<1$
for $i\in I^c$. Therefore, we only need to check the following set
of (in)equalities
\begin{eqnarray}
Y^T(y+Y\hat a-\hat u)&=&0\label{eq:set1}\\
(y+Y\hat a-\hat u)_i &=& \lambda \text{sgn}(\hat u_i)\label{eq:set2},\quad \hat u_i\not=0\\
|(y+Y\hat a-\hat u)_i| &<& \lambda,\quad\quad\quad\quad  \hat
u_i=0\label{eq:set3}
\end{eqnarray}
We first check Equation \ref{eq:set1}.
\begin{lem}
Equation \ref{eq:set1} is satisfied with $(\hat u, \hat a)$ given in
Theorem \ref{thm2}.
\end{lem}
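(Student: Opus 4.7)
The plan is to verify \eqref{eq:set1} by a direct substitution of the closed-form expression $\hat a = -(Y^TY)^{-1}Y^T(y-\hat u)$ given in Theorem \ref{thm2}, then recognizing the resulting residual as the projection $P(y-\hat u)$.

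First I would plug the formula for $\hat a$ into the residual $y + Y\hat a - \hat u$. A short computation yields
\[
y + Y\hat a - \hat u \;=\; (y-\hat u) - Y(Y^TY)^{-1}Y^T(y-\hat u) \;=\; \bigl[I - Y(Y^TY)^{-1}Y^T\bigr](y-\hat u) \;=\; P(y-\hat u),
\]
where $P$ is precisely the orthogonal projector onto the orthogonal complement of the column space of $Y$ defined in the statement of Theorem \ref{thm2}.

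Next I would multiply through by $Y^T$ and use the defining identity
\[
Y^T P \;=\; Y^T - Y^TY(Y^TY)^{-1}Y^T \;=\; Y^T - Y^T \;=\; 0,
\]
which is valid since $Y^TY$ was assumed invertible (this is the same invertibility hypothesis used to define $\hat a$ in the first place). Combining the two displays gives $Y^T(y + Y\hat a - \hat u) = Y^T P (y-\hat u) = 0$, which is exactly \eqref{eq:set1}.

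There is essentially no obstacle here: the formula for $\hat a$ was designed to be the least-squares solution of $y + Ya - \hat u \approx 0$, so the normal equation \eqref{eq:set1} holds by construction. The more substantive work in the theorem lies in the subsequent lemmas verifying \eqref{eq:set2} and \eqref{eq:set3}, where the noise bounds via conditions (1)--(3) actually enter.
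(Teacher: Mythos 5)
Your proof is correct and is essentially the same computation as the paper's: both substitute $\hat a = -(Y^TY)^{-1}Y^T(y-\hat u)$ and cancel $Y^TY(Y^TY)^{-1}=I$, you merely packaging the cancellation as the identity $Y^TP=0$ for the projector $P$. No difference in substance.
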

\begin{proof} Actually,
\begin{eqnarray*}
Y^T(y+Y\hat a-\hat u) &=& Y^T(y-\hat u)+Y^TY\hat a\\
                      &=& Y^T(y-\hat u)-Y^TY(Y^TY)^{-1}Y^T(y-\hat
                      u)=0
\end{eqnarray*}
\end{proof}
Next we check Equation \ref{eq:set2}.
\begin{lem}\label{lemset2}
Equation \ref{eq:set2} is satisfied with $(\hat u, \hat a)$ given in
Theorem \ref{thm2} with probability at least $1-8p/n-(p+1)2^{-n/5}$.
\end{lem}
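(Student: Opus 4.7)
The plan splits the lemma into a deterministic algebraic identity and a probabilistic bound on how far $\hat u_I$ strays from $u_I^*$.

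\textbf{Deterministic part.} First substitute the closed form of $\hat a$. Because $\hat a = -(Y^TY)^{-1}Y^T(y - \hat u)$, the residual becomes
\begin{equation*}
y + Y\hat a - \hat u \;=\; \bigl(I - Y(Y^TY)^{-1}Y^T\bigr)(y-\hat u) \;=\; P(y - \hat u).
\end{equation*}
The AR identity $y + Ya^* = u^* + e$ combined with $PY = 0$ gives $P(y - u^*) = Pe$. Since $u^*$ and $\hat u$ are both supported on $I$, the explicit form of $\hat u$ yields
\begin{equation*}
P(u^* - \hat u) \;=\; P_I(u^*_I - \hat u_I) \;=\; -P_I (P_I^T P_I)^{-1}\bigl(P_I^T e - \lambda\,\mathrm{sgn}(u^*_I)\bigr).
\end{equation*}
The key observation is that $P^2 = P = P^T$, so $P_I^T P_I$ is exactly the principal submatrix $P_{II}$, and for any $i \in I$ one has the two row identities
\begin{equation*}
(P_I P_{II}^{-1} P_I^T v)_i \;=\; (Pv)_i, \qquad (P_I P_{II}^{-1} w)_i \;=\; w_i,
\end{equation*}
valid for arbitrary $v \in \mathbb{R}^n$ and $w \in \mathbb{R}^{|I|}$. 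Applying them with $v = e$ and $w = \mathrm{sgn}(u_I^*)$, the $Pe$ terms cancel and one obtains $(y + Y\hat a - \hat u)_i = \lambda\,\mathrm{sgn}(u_i^*)$ for every $i \in I$.

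\textbf{Probabilistic part.} To match Equation \ref{eq:set2} one still needs $\mathrm{sgn}(\hat u_i) = \mathrm{sgn}(u_i^*)$ (in particular $\hat u_i \neq 0$) for every $i \in I$. A sufficient condition is $\|\hat u_I - u_I^*\|_\infty < u_{\min}$, which by the triangle inequality reduces to bounding the noise term $\|P_{II}^{-1} P_I^T e\|_\infty$ and the bias term $\lambda \, \|P_{II}^{-1}\,\mathrm{sgn}(u_I^*)\|_\infty$. The strategy is to replace $P$ by its noiseless counterpart $P^\circ = I - X(X^TX)^{-1}X^T$ via a matrix perturbation argument: condition (1) controls $\lambda_{\min}(X_2^T X_2)$ from below (so that $X^TX$ is well conditioned), condition (3) keeps the rows of $X_1$ small in norm relative to $\|x\|_2$ (so that $P_{II}$ is close to the identity), and standard Gaussian concentration on $W = Y - X$ shows the perturbation from $Y\leftrightarrow X$ is benign. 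Then condition (2) directly bounds the bias term via $X_1^T \mathrm{sgn}(u_I^*)$, while a union bound over the $|I|\leq p$ coordinates of Gaussian tails in $e$ controls the noise term. The $8p/n$ factor arises from these coordinate-wise Gaussian bounds, and $(p+1)\,2^{-n/5}$ from sub-Gaussian / chi-square concentration applied to the $n$-dimensional noise $w$ (needed when bounding $\|W\|$ and related quadratic forms). The hypotheses $\lambda \geq 6\sigma p a_{\max}\sqrt{\log n}$ and $u_{\min} \geq 2\lambda$ give exactly enough slack to absorb both contributions.

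\textbf{Main obstacle.} The hard part is the coupling between $Y$ and $e$: both are linear functions of the same noise vector $w$, so $P$ and $e$ are not independent, which precludes a direct application of Gaussian concentration to $P_I^T e$. The argument must disentangle this dependence by isolating the dominant noiseless matrix $X$ inside $P_{II}$ and controlling the stochastic perturbation in operator norm before concentrating on the residual Gaussian factors. Conditions (1)--(3) are calibrated exactly to make this perturbation analysis succeed.
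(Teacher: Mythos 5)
Your deterministic computation is exactly the paper's: the same substitution of $\hat a$ to obtain $y+Y\hat a-\hat u=P(y-\hat u)$, the same use of $PY=0$ and $P^2=P=P^T$, and your row identities $(P_IP_{II}^{-1}P_I^Tv)_i=(Pv)_i$ are the paper's trick $P_I^TP=P_I^T$ in a different guise, both landing on $(y+Y\hat a-\hat u)_I=\lambda\,\text{sgn}(u_I^*)$. The remaining probabilistic claim $\text{sgn}(\hat u_I)=\text{sgn}(u_I^*)$ is only sketched in your write-up, but the paper omits it as well (deferring to the proof of Lemma \ref{final_lemma}), so your proposal matches the paper's argument in both substance and level of detail.
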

\begin{proof}
Note that $P$ has the property that $P^2=P$ and $PY=0$. Therefore by
multiplying $P$ to both sides of Equation \ref{eq:noise}, we have
\begin{eqnarray}\label{eq:noise1}
Py=Pu^*+Pe
\end{eqnarray}
Now we can compute
\begin{eqnarray*}
y+Y\hat a-\hat u &=& y-\hat u -Y(Y^TY)^{-1}Y^T(y-\hat u)\\
                 &=& P(y-\hat u)\\
                 &\overset{(a)}{=}& Pu^*+Pe-P\hat u\\
                 &\overset{(b)}{=}& Pe+P_I(u^*_I-\hat u_I) \\
                 &=& Pe-P_I(P^T_IP_I)^{-1}\left(P^T_Ie-\lambda\text{sgn}(u^*_I)\right)
\end{eqnarray*}
where (a) follows from Equation \ref{eq:noise1} and (b) follows from
the fact that $\hat u_{I^c}=u_{I^c}^*=0$.

 There is a small trick
here.  Since $y+Y\hat a-\hat u=P(y-\hat u)$ as we have shown and
$P^2=P$, we must have $P(y+Y\hat a-\hat u)=y+Y\hat a-\hat u$. This
implies Equation \ref{eq:set2} is correct:
\begin{eqnarray*}
(y+Y\hat a-\hat u)_I &=& P_I^T(y+Y\hat a-\hat u)\\
                      &=& P_I^T\left(Pe-P_I(P^T_IP_I)^{-1}\left(P^T_Ie-\lambda\text{sgn}(u^*_I)\right) \right)\\
                      &\overset{(a)}{=}& P_I^TPe-P_I^T e+\lambda\text{sgn}(u^*_I)\\
                      &=&\lambda\text{sgn}(u^*_I)=\lambda\text{sgn}(\hat u_I)
\end{eqnarray*}
where (a) follows from $P_I^T P=P_I^T$ (i.e., $P^2=P$) and the last
equality holds true with probability at least
$1-8p/n-(p+1)2^{-n/5}$. The proof of last equality is similar to the
proof of Lemma \ref{final_lemma} and is omitted here.
\end{proof}

Verifying inequality \ref{eq:set3} requires more effort. We first
simplifies the formula for $(y+Y\hat a-\hat u)_{I^c}$.

\begin{lem}
With $(\hat u,\hat a)$ given in Theorem \ref{thm2}, we have
\begin{eqnarray}
(y+Y\hat a-\hat
u)_{I^c}=-Y_2(Y_2^TY_2)^{-1}Y^{T}_1\lambda\text{sgn}(u^*_I)+(I-Y_2
(Y_2^TY_2)^{-1}Y^T_2)e_{I^c}
\end{eqnarray}
where we denote $Y_1$ as the submatrix comprises of the rows of $Y$
indexed by $I$ and $Y_2$ as the submatrix comprises of the rows of
$Y$ indexed by $I^c$.
\end{lem}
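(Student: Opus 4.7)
The plan is to reuse the representation of $r := y + Y\hat a - \hat u$ already derived in the proof of Lemma~\ref{lemset2}, namely
\[
r = P(y-\hat u) = Pe - P_I(P_I^T P_I)^{-1}\bigl(P_I^T e - \lambda\operatorname{sgn}(u^*_I)\bigr),
\]
and simply read off the $I^c$ block, obtaining
\[
r_{I^c} = (Pe)_{I^c} + (P_I)_{I^c}(P_I^T P_I)^{-1}\lambda\operatorname{sgn}(u^*_I) - (P_I)_{I^c}(P_I^T P_I)^{-1} P_I^T e.
\]
Matching this against the target expression then reduces to two purely algebraic identities:
\[
\textbf{(a)}\quad (P_I)_{I^c}(P_I^T P_I)^{-1} = -Y_2(Y_2^T Y_2)^{-1} Y_1^T,
\]
\[
\textbf{(b)}\quad (Pe)_{I^c} - (P_I)_{I^c}(P_I^T P_I)^{-1} P_I^T e = \bigl(I - Y_2(Y_2^T Y_2)^{-1} Y_2^T\bigr) e_{I^c}.
\]

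To establish (a) and (b), I would first collect three block-level facts about $P=I-Y(Y^TY)^{-1}Y^T$. Writing $Y$ in the row partition $Y=[Y_1^T\ Y_2^T]^T$ gives directly $(P_I)_{I^c} = -Y_2(Y^TY)^{-1}Y_1^T$ and $(P)_{I^c,I^c} = I - Y_2(Y^TY)^{-1}Y_2^T$. Next, idempotency $P^2=P$ together with $P=P^T$ yields $P_I^T P_I = P_{II} = I - Y_1(Y^TY)^{-1}Y_1^T$, which is the step that turns the awkward $(P_I^TP_I)^{-1}$ into something one can invert in closed form. Finally, since $Y^TY = Y_1^T Y_1 + Y_2^T Y_2$, the Woodbury identity gives the crucial simplification
\[
\bigl(I - Y_1(Y^TY)^{-1}Y_1^T\bigr)^{-1} = I + Y_1(Y_2^T Y_2)^{-1} Y_1^T.
\]

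With these in hand, identity (a) is verified by expanding $-Y_2(Y^TY)^{-1}Y_1^T\bigl(I + Y_1(Y_2^T Y_2)^{-1}Y_1^T\bigr)$ and using $(Y^TY)^{-1}Y_1^T Y_1 = I - (Y^TY)^{-1}Y_2^T Y_2$ to cancel the cross-term, leaving exactly $-Y_2(Y_2^T Y_2)^{-1} Y_1^T$. Identity (b) is then proved by the same kind of block manipulation applied to the $e$-dependent terms: expand $(Pe)_{I^c}$ as $-Y_2(Y^TY)^{-1}Y_1^T e_I + \bigl(I - Y_2(Y^TY)^{-1}Y_2^T\bigr) e_{I^c}$, expand $P_I^T e = (I - Y_1(Y^TY)^{-1}Y_1^T)e_I - Y_1(Y^TY)^{-1}Y_2^T e_{I^c}$, plug (a) into the middle factor, and watch the $e_I$ contributions cancel, leaving the desired $\bigl(I - Y_2(Y_2^T Y_2)^{-1} Y_2^T\bigr) e_{I^c}$.

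The only real obstacle is bookkeeping: the natural projection lives on $\mathbb{R}^n$ with inverse $(Y^TY)^{-1}$, while the target answer lives on $\mathbb{R}^{|I^c|}$ with inverse $(Y_2^T Y_2)^{-1}$, so the content of the lemma is essentially a Schur-complement/Woodbury identity converting one into the other. Once the three block facts above are in place, the manipulation is mechanical and the identity pops out; no probabilistic estimate is needed, since the statement is an exact deterministic equation conditional on $Y_2^T Y_2$ being invertible (which is guaranteed by Condition~(1)).
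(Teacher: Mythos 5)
Your proposal is correct and follows essentially the same route as the paper: both read off the $I^c$ block of the representation $y+Y\hat a-\hat u = Pe - P_I(P_I^TP_I)^{-1}(P_I^Te-\lambda\operatorname{sgn}(u_I^*))$ from Lemma~\ref{lemset2}, identify the blocks of $P$ in terms of $Y_1,Y_2$ (including $P_I^TP_I=P_{11}=I-Y_1(Y^TY)^{-1}Y_1^T$ via $P^2=P$), and invoke the Woodbury identity together with $Y^TY=Y_1^TY_1+Y_2^TY_2$ to collapse everything to $(Y_2^TY_2)^{-1}$. The cancellations you describe in (a) and (b) are exactly the paper's computations for $\lambda P_{12}^TP_{11}^{-1}\operatorname{sgn}(u_I^*)$ and $(P_{22}-P_{12}^TP_{11}^{-1}P_{12})e_{I^c}$.
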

\begin{proof}
Following from the proof of Lemma \ref{lemset2}, we have ,
\begin{eqnarray}\label{eq:nonsupp}
(y+Y\hat a-\hat
u)_{I^c}=P^T_{I^c}(P-P_I(P^T_IP_I)^{-1}P^{T}_I)e+P^T_{I^c}P_I(P^T_IP_I)^{-1}\lambda\text{sgn}(u^*_I)
\end{eqnarray}

To simplify the above equation, we introduce $P_{11}\in
\mathbb{R}^{k\times k}$ as the matrix comprises of the rows of $P$
indexed by $I$ and the columns of $P$ indexed by $I$. Similarly,
$P_{12}\in \mathbb{R}^{(n-k)\times k}$ is the matrix comprises of
the rows of $P$ indexed by $I$ and the columns of $P$ indexed by
$I^c$; $P_{22}\in \mathbb{R}^{(n-k)\times (n-k)}$ is the matrix
comprises of the rows of $P$ indexed by $I^c$ and the columns of $P$
indexed by $I^c$. By this definition, after some column and row
permutations, $P$ can be rewritten as
\begin{eqnarray}
\begin{bmatrix}
P_{11}& P_{12}\\
P_{12}^T & P_{22}
\end{bmatrix}
\end{eqnarray}

It is easy to check that $P_{11}=P_I^TP_I$ and
$P_{12}^T=P^T_{I^c}P_I$ (since $P^2=P$). Furthermore,
\begin{eqnarray*}
P_{I^c}^T-P^T_{12}P^{-1}_{11}P^{T}_I&=&\begin{bmatrix}P_{12}^T &
P_{22}\end{bmatrix}-P^T_{12}P^{-1}_{11}\begin{bmatrix}P_{11}&
P_{12}\end{bmatrix}\\
&=&\begin{bmatrix}0& P_{22}-P_{12}^T P_{11}^{-1}P_{12}\end{bmatrix}
\end{eqnarray*}

Hence, Equation \ref{eq:nonsupp} can be simplified to
\[(y+Y\hat a-\hat
u)_{I^c}=(P_{22}-P_{12}^T P_{11}^{-1}P_{12})e_{I^c}+\lambda
P^T_{12}P^{-1}_{11}\text{sgn}(u^*_I)\]

We note that $P_{11},P_{12},P_{22}$ can be expressed in terms of $Y,
Y_1$ and $Y_2$.
\begin{eqnarray*}
P_{11} &=& I-Y_1 (Y^TY)^{-1}Y^T_1 \\
P_{12} &=& -Y_1 (Y^TY)^{-1}Y^T_2\\
P_{22} &=& I-Y_2 (Y^TY)^{-1}Y^T_2
\end{eqnarray*}
Moreover $P_{11}^{-1}$ can be derived via matrix inversion lemma:
\begin{eqnarray*}
P_{11}^{-1} &=& (I-Y_1 (Y^TY)^{-1}Y^T_1)^{-1} \\
&=& I+Y_1(Y^TY-Y_1^TY_1)^{-1}Y^{T}_1=I+Y_1(Y_2^TY_2)^{-1}Y^{T}_1
\end{eqnarray*}
Finally, we get
\begin{eqnarray*}
\lambda P^T_{12}P^{-1}_{11}\text{sgn}(u^*_I) &=& -
Y_2(Y^TY)^{-1}Y^T_1(I+Y_1(Y_2^TY_2)^{-1}Y^{T}_1)\lambda\text{sgn}(u^*_I)\\
  &=& -Y_2
  [(Y^TY)^{-1}+(Y^TY)^{-1}Y_1^TY_1(Y_2^TY_2)^{-1}]Y^{T}_1\lambda\text{sgn}(u^*_I)\\
  &=&-Y_2
  (Y^TY)^{-1}[Y_2^TY_2+Y_1^TY_1](Y_2^TY_2)^{-1}Y^{T}_1\lambda\text{sgn}(u^*_I)\\
  &\overset{(a)}{=}&-Y_2(Y_2^TY_2)^{-1}Y^{T}_1\lambda\text{sgn}(u^*_I)
\end{eqnarray*}
where (a) follows from the fact that $Y^TY = Y_2^TY_2+Y_1^TY_1$. And
similarly by repeatedly using this fact we can find the following
simplification
\begin{eqnarray*}
(P_{22}-P_{12}^T P_{11}^{-1}P_{12})e_{I^c} &=& (I-Y_2
(Y^TY)^{-1}Y^T_2)e_{I^c}\\
& &-(Y_2(Y^TY)^{-1}Y^T_1(I+Y_1(Y_2^TY_2)^{-1}Y^{T}_1)Y_1
(Y^TY)^{-1}Y^T_2)e_{I^c}\\
&=&(I-Y_2 (Y_2^TY_2)^{-1}Y^T_2)e_{I^c}
\end{eqnarray*}
\end{proof}
In order to justify the condition \ref{eq:set3}, we also need the
following lemma.
\begin{lem}\label{claim3}
The following three claims hold true:
\begin{description}
  \item[(i)] w.p. at least $1-p\cdot(4/n+2^{-n/5})$, $\|Y_2^T e_{I^c}\|_\infty\leq 2\sqrt{n\log n}\sigma a_{\max}x_{\max}\sqrt{2p}$.
  \item[(ii)] w.p. at least $1-4p/n$, $\|Y_1^T \lambda \text{sgn}(z_I^*)\|_\infty\leq 2\lambda\|x\|_2\sqrt{\log n}$
  \item[(iii)] w.p. at least $1-2^{-n/5}$, $\lambda_{\max}\left((Y_2^TY_2)^{-1}\right)\leq
  2\lambda_{\max}\left((X_2^TX_2)^{-1}\right)\leq
  \frac{2c}{\|x\|_2^2}$
\end{description}
\end{lem}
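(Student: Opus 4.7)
All three bounds come from the decomposition $Y=X+W$, where $W$ is the Toeplitz noise matrix built from $w$ in the same way $Y$ is built from $y$; so $Y_1=X_1+W_1$ and $Y_2=X_2+W_2$. Writing $e=Bw$ with $B$ the $n\times n$ lower triangular Toeplitz matrix encoding the AR filter ($B_{tt}=1$, $B_{t,t-i}=a_i$ for $1\le i\le p$, so $\|B\|_{op}\le 1+pa_{\max}$ by Young's inequality), \emph{part (ii)} follows at once: in $Y_1^T\text{sgn}(u_I^*)=X_1^T\text{sgn}(u_I^*)+W_1^T\text{sgn}(u_I^*)$, assumption (2) handles the first summand directly by $\|x\|_2\sqrt{\log n}$, and each coordinate of the second is a mean-zero Gaussian with variance at most $|I|\sigma^2\le k\sigma^2\le\|x\|_2^2$ (using (1) and $k\le n$), so a Gaussian tail plus a union bound over the $p$ columns gives the same envelope with probability at least $1-4p/n$.

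\textbf{Part (iii).} Treat $Y_2^TY_2=X_2^TX_2+\Delta$ as a perturbation, with $\Delta=X_2^TW_2+W_2^TX_2+W_2^TW_2$; the rightmost inequality is immediate from (1), so it suffices to show $\|\Delta\|_{op}\le\tfrac12\lambda_{\min}(X_2^TX_2)$. Since $p$ is constant and $W_2$ is an $n\times p$ Gaussian matrix, Davidson--Szarek yields $\|W_2\|_{op}\le 2\sigma\sqrt n$ with probability at least $1-2^{-n/5}$; hence $\|W_2^TW_2\|_{op}\le 4n\sigma^2$ and $\|X_2^TW_2\|_{op}\le\|X_2\|_F\|W_2\|_{op}\le 2\sigma\sqrt n\cdot\sqrt{np}\,x_{\max}$. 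Assumption (1) absorbs the first term and the first half of assumption (3) absorbs the second into $\tfrac12\|x\|_2^2/c$.

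\textbf{Part (i).} Split $(Y_2^Te_{I^c})_j=(X_2^Te_{I^c})_j+(W_2^Te_{I^c})_j$ and expand using $e=Bw$. The signal piece is linear in $w$,
\[(X_2^Te_{I^c})_j=\sum_{i=0}^p a_i\sum_{t\in I^c}x_{t-j}w_{t-i},\]
a Gaussian with variance at most $\sigma^2(1+pa_{\max})^2\|x\|_2^2$; a Gaussian tail and a union bound over the $O(p^2)$ pairs $(i,j)$ yield $|(X_2^Te_{I^c})_j|\lesssim\sigma a_{\max}\|x\|_2\sqrt{p\log n}\le\sigma a_{\max}x_{\max}\sqrt{pn\log n}$ via $\|x\|_2\le\sqrt n\,x_{\max}$. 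The noise piece
\[(W_2^Te_{I^c})_j=\sum_{i=0}^p a_i\sum_{t\in I^c}w_{t-j}w_{t-i}\]
is a quadratic form in $w$: for $i\ne j$ the inner sum is centered, sub-exponential of order $\sigma^2\sqrt{n\log n}$ by Hanson--Wright; for $i=j$ the chi-square $a_j\sum_{t\in I^c}w_{t-j}^2$ has fluctuations of the same order, and its mean $a_j|I^c|\sigma^2$ is absorbed using (1) and (3). Union bounding over the $p$ columns yields (i) with the claimed failure probability $p(4/n+2^{-n/5})$.

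\textbf{Main obstacle.} The delicate step is the deterministic diagonal-mean $a_j|I^c|\sigma^2\sim a_{\max}n\sigma^2$ in part (i): dominating it by the advertised envelope $\sigma a_{\max}x_{\max}\sqrt{pn\log n}$ is precisely what the numerical tightness of conditions (1) and (3), which link $n\sigma^2$, $\|x\|_2^2$, and $x_{\max}^2\log n$, is designed to deliver. Parts (ii) and (iii) are by contrast routine Gaussian concentration and matrix perturbation arguments.
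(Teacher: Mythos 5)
Your overall skeleton---the decomposition $Y=X+W$, a linear-Gaussian plus quadratic-form split in (i), assumption (2) plus a Gaussian tail and a union bound over the $p$ columns in (ii), and a perturbation argument in (iii)---is the same as the paper's, and your part (ii) is essentially the paper's argument. But two of your steps do not go through as written. In part (iii) you perturb the Gram matrix $Y_2^TY_2=X_2^TX_2+\Delta$, which forces you to control the cross term $X_2^TW_2$. Your bound $\|X_2\|_F\|W_2\|_{op}\le 2\sigma\sqrt{n}\cdot\sqrt{np}\,x_{\max}=2\sigma n\sqrt{p}\,x_{\max}$ is not below $\tfrac{1}{2}\lambda_{\min}(X_2^TX_2)\ge \|x\|_2^2/(2c)$ under the stated assumptions: using the upper half of condition (3) this would require $x_{\max}\gtrsim\sigma\sqrt{n}$, while condition (3) only gives $x_{\max}\ge 2\sigma\sqrt{\log n}$; even with the sharper $\|X_2\|_{op}\le\sqrt{p}\|x\|_2$ the shortfall is a factor of order $\sqrt{cp}$, which still breaks the advertised factor $2$. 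In addition, Davidson--Szarek does not apply to $W_2$, whose columns are shifts of the \emph{same} noise vector $w$ rather than independent Gaussian columns. The paper avoids both issues by perturbing singular values instead: $\sigma_{\min}(Y_2)\ge\sigma_{\min}(X_2)-\sigma_{\max}(W_2)$ with the structure-free bound $\sigma_{\max}(W_2)\le\|W_2\|_F\le\sqrt{p}\,\|w\|_2\le\sqrt{2np}\,\sigma$ (Cauchy--Schwarz plus a $\chi^2$ tail); the constant $(\sqrt{2}-1)^{-2}$ in condition (1) is calibrated exactly so that this yields $\sigma_{\min}(Y_2)\ge\sigma_{\min}(X_2)/\sqrt{2}$, and no cross term ever appears.

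In part (i) you correctly isolate what you call the main obstacle---the diagonal of the quadratic form contributes a deterministic mean $\sum_j a_j|I^c|\sigma^2\sim a_{\max}n\sigma^2$---but your assertion that conditions (1) and (3) absorb it into the envelope $2\sqrt{2p}\,\sigma a_{\max}x_{\max}\sqrt{n\log n}$ is not substantiated and does not appear to be true: domination would require $x_{\max}\gtrsim\sigma\sqrt{n/(p\log n)}$, whereas condition (3) supplies only the lower bound $x_{\max}\ge 2\sigma\sqrt{\log n}$ (its other half is an \emph{upper} bound on $x_{\max}$). You should know that the paper's own proof is also thin at exactly this point---it asserts without proof that the quadratic term is at most $2pa_{\max}\sigma^2\sqrt{n\log n}$, which likewise ignores the nonzero mean---so you have put your finger on a genuine weak spot of the lemma; but naming the obstacle and then declaring it absorbed is not a proof, and as written this step fails.
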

\begin{proof}
To prove (i), we try to bound the first component
 $\left(Y_2^T e_{I^c}\right)_1$. By definition, the first column of
 $Y$ equals $[0,\, y_0, \cdots,y_{n-2}]^T=[0,\, x_0, \cdots,x_{n-2}]^T+[0,\, w_0,
 \cdots,w_{n-2}]^T$. We also remember $e_i =
 w_i+\sum_{j=1}^pa_jw_{i-j}$ where $w_i$ are i.i.d. Gaussian
 $\mathcal{N}(0,\sigma^2)$. Hence, we have
 \begin{eqnarray*}
\left(Y_2^T e_{I^c}\right)_1 = \sum_{i\in
I^c}x_{i-1}(w_i+\sum_{j=1}^pa_jw_{i-j})+\sum_{i\in
I^c}w_{i-1}(w_i+\sum_{j=1}^pa_jw_{i-j})
 \end{eqnarray*}

 It is easy to check that the first term of RHS is zero-mean Gaussian
random variable with variance $\leq pa_{\max}^2x_{\max}^2n\sigma^2$.
It is well known that for standard Gaussian random variable $t$,
$\Pr(|t|\geq a)\leq 2e^{-a^2/2}$. So we conclude that with
probability $\geq 1-2/n$
$$\left|\sum_{i\in
I^c}x_{i-1}(w_i+\sum_{j=1}^pa_jw_{i-j})\right|\leq \sigma
a_{\max}x_{\max}\sqrt{2pn\log n}$$ It also can be proved that with
probability $\geq 1-2/n-2^{-n/5}$
$$\left|\sum_{i\in
I^c}w_{i-1}(w_i+\sum_{j=1}^pa_jw_{i-j})\right|\leq 2pa_{\max}
\sigma^2\sqrt{n\log n}$$ We notice that $\sigma
a_{\max}x_{\max}\sqrt{2pn\log n}\geq 2pa_{\max} \sigma^2\sqrt{n\log
n}$ and hence claim (i) follows.

Next, we prove claim (ii). Again, $\left(Y_1^T \lambda
\text{sgn}(z_I^*)\right)_1$ can be decomposed into two terms;
\begin{eqnarray*}
\left(Y_1^T \lambda \text{sgn}(z_I^*)\right)_1 = \left(X_1^T \lambda
\text{sgn}(z_I^*)\right)_1+\sum_{i\in I^c}w_{i-1}\lambda
\text{sgn}(z_I^*)
 \end{eqnarray*}
 The first term is bounded from the assumption and the second term is
 Gaussian which is bounded by $\lambda\sigma\sqrt{2n\log n}\leq \lambda\|x\|_2\sqrt{\log n}$ (assumption (3) in Subsection \ref{sec:blind}) w.p. $\geq
 1-2/n$.

For (iii), we only need to show that with high probability
$\lambda_{\min}\left(Y_2^TY_2\right)\geq
  \frac{1}{2}\lambda_{\min}\left(X_2^TX_2\right)$, or $\sigma_{\min}\left(Y_2\right)\geq
  \frac{1}{\sqrt{2}}\sigma_{\min}\left(X_2\right)$ where $
  \sigma_{
  \min}(A)$ denotes the smallest singular value of $A$.

We denote the Gaussian noise matrix
 $$W = \begin{bmatrix}
 0 & \cdots & 0\\
 w_0 & \cdots &0\\
  \vdots & \ddots & \vdots\\
w_{p-1} &\cdots & w_0\\
 \vdots & \vdots & \vdots\\
 w _{n-2} & \cdots & w_{n-p}
\end{bmatrix}$$
and call $W_2$ as the submatrix that comprises of the rows of $W$
indexed by $I^c$.  Then, we have
\begin{eqnarray*}
\sigma_{\min}\left(Y_2\right) &=& \min_{\|t\|_2=1}\|Y_2 t\|_2 =
\min_{\|t\|_2=1}\|X_2t+W_2 t\|_2\\
&\geq& \min_{\|t\|_2=1}\|X_2t\|_2-\max_{\|t\|_2=1}\|W_2 t\|_2 =
\sigma_{\min}(X_2)- \sigma_{\max}(W_2)
\end{eqnarray*}

So the remaining work is to upper bound $\sigma_{\max}(W_2)$. A
tight bound in this case is very difficult. However, the following
bound is good enough for our proof. By denoting $W_{2,i}$ as the
$i$-th column of $W_2$, we have
\begin{eqnarray*}
\sigma_{\max}(W_2) &=& \max_{\|t\|_2=1}\|W_2 t\|_2\\
                   &=& \max_{\|t\|_2=1}\sqrt{\sum_{i}\langle W_{2,i},
                   t\rangle^2}\\
                   &\leq & \sqrt{\sum_{i}\|W_{2,i}\|_2^2}\leq
                   \sqrt{p\|w\|_2^2}
\end{eqnarray*}
where the second last inequality follows from Cauchy-Schwartz
inequality. Then by the tail probability of $\chi^2$ distribution,
we have with probability $1-2^{-n/5}$,
\begin{eqnarray*}
\sigma_{\max}(W_2) \leq
                   \sqrt{p\|w\|_2^2}\leq \sqrt{2np\sigma^2}
\end{eqnarray*}
Then by applying assumption (1) in Subsection \ref{sec:blind} we
have proved the claim (iii).
\end{proof}

Finally, we can show that $(\hat u,\hat a)$ satisfies the condition
\ref{eq:set3}.
\begin{lem}\label{final_lemma}
Equation \ref{eq:set1} is satisfied with $(\hat u, \hat a)$ given in
Theorem \ref{thm2} with probability at least $1-8p/n-(p+1)2^{-n/5}$.
\end{lem}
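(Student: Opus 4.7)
The plan is to start from the closed-form expression for $(y+Y\hat a-\hat u)_{I^c}$ derived in the previous lemma, namely
\begin{equation*}
(y+Y\hat a-\hat u)_{I^c}=\underbrace{-Y_2(Y_2^TY_2)^{-1}Y_1^{T}\lambda\,\text{sgn}(u^*_I)}_{\text{sign term }T_1}+\underbrace{\bigl(I-Y_2(Y_2^TY_2)^{-1}Y_2^T\bigr)e_{I^c}}_{\text{noise term }T_2},
\end{equation*}
and bound $\|T_1\|_\infty$ and $\|T_2\|_\infty$ each by roughly $\lambda/2$ (or $\lambda/3$ with slack), so that the triangle inequality yields $\|(y+Y\hat a-\hat u)_{I^c}\|_\infty<\lambda$ as required by (\ref{eq:set3}).

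For the sign term $T_1$, I would bound coordinate-wise by factoring through $\ell_2$: for $i\in I^c$,
\[
|T_{1,i}|\le \|[Y_2]_i\|_2\cdot\lambda_{\max}\bigl((Y_2^TY_2)^{-1}\bigr)\cdot\|Y_1^T\lambda\text{sgn}(u^*_I)\|_2.
\]
A row of $Y_2$ has $p$ entries, each of the form $x_j+w_j$, so under condition (3) ($x_{\max}\ge2\sigma\sqrt{\log n}$) a Gaussian tail bound gives $\|[Y_2]_i\|_2\le 2\sqrt{p}\,x_{\max}$ uniformly in $i$ with failure probability at most $O(p/n)$. Passing from $\ell_\infty$ to $\ell_2$ on a $p$-vector, claim (ii) of Lemma \ref{claim3} gives $\|Y_1^T\lambda\text{sgn}(u^*_I)\|_2\le 2\lambda\|x\|_2\sqrt{p\log n}$, and claim (iii) gives $\lambda_{\max}((Y_2^TY_2)^{-1})\le 2c/\|x\|_2^2$. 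Multiplying these three and invoking condition (3) in the form $x_{\max}/\|x\|_2\le 1/(24cp\sqrt{\log n})$ collapses the bound to at most $\lambda/3$.

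For the noise term I would split $T_2=e_{I^c}-Y_2(Y_2^TY_2)^{-1}Y_2^T e_{I^c}$. The first piece is Gaussian with variance $O(pa_{\max}^2\sigma^2)$, so $\|e_{I^c}\|_\infty\le O(\sigma\sqrt{p}\,a_{\max}\sqrt{\log n})\le \lambda/3$ with probability at least $1-2/n$ by the assumption $\lambda\ge 6\sigma pa_{\max}\sqrt{\log n}$. For the second piece, the same $\ell_\infty$-through-$\ell_2$ trick as for $T_1$ gives
\[
|[Y_2(Y_2^TY_2)^{-1}Y_2^T e_{I^c}]_i|\le \|[Y_2]_i\|_2\cdot\lambda_{\max}\bigl((Y_2^TY_2)^{-1}\bigr)\cdot\sqrt{p}\,\|Y_2^T e_{I^c}\|_\infty,
\]
and claims (i) and (iii), combined with the upper bound $x_{\max}^2/\|x\|_2^2\le 1/(4c\sqrt{2pn})$ from condition (3), telescope to at most $2p\sigma a_{\max}\sqrt{\log n}\le\lambda/3$. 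The failure probabilities from claims (i)--(iii), the row-length bound, and the $\|e_{I^c}\|_\infty$ bound union-bound to the stated $1-8p/n-(p+1)2^{-n/5}$.

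The main obstacle is bookkeeping rather than depth: one must be careful that \emph{every} factor of $p$, $\sqrt{\log n}$, $x_{\max}$, $\|x\|_2$, and $\sigma a_{\max}$ cancels cleanly when the three conditions in Subsection \ref{sec:blind} are plugged in, and in particular that the seemingly loose bound $\sigma_{\max}(W_2)\le \sqrt{p\|w\|_2^2}$ from Lemma \ref{claim3}(iii) is exactly strong enough to keep the noise term below $\lambda/3$. The only slightly subtle step is replacing $Y_2$ row norms by $X_2$ row norms (controlled by $x_{\max}$) on the high-probability event where the Gaussian noise rows are small; absent that, one could not bound $\|[Y_2]_i\|_2$ by a deterministic quantity tied to the assumptions.
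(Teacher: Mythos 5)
Your proposal follows essentially the same route as the paper's own proof: the same decomposition of $(y+Y\hat a-\hat u)_{I^c}$ into the sign term and the noise term, the same three-way split with each piece bounded by $\lambda/3$ via the row norms of $Y_2(Y_2^TY_2)^{-1}$ (controlled on the event $\max_i|w_i|\le 2\sigma\sqrt{\log n}$ so that $x_{\max}+2\sigma\sqrt{\log n}\le 2x_{\max}$), claims (i)--(iii) of Lemma \ref{claim3}, and conditions (1)--(3) together with $\lambda\ge 6\sigma p a_{\max}\sqrt{\log n}$. The constants and the union bound work out identically, so this is correct and matches the paper's argument.
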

\begin{proof} From the tail probability of standard Gaussian $\Pr(|t|\geq a)\leq 2e^{-a^2/2}$, we know that
with probability at least $1-2/n$, $\max_{i}|w_i|\leq
2\sigma\sqrt{\log n}$. Therefore the $\ell_2$ norm of all the rows
of $Y_2$ is upper bounded $\sqrt{p}(x_{\max}+2\sigma\sqrt{\log n})$
with probability at least $1-2/n$.  Combined with claim (iii) in
Lemma \ref{claim3}, we know that the $\ell_2$ norm of all the rows
of $Y_2(Y_2^T Y_2)^{-1}$ is upper bounded
$\frac{2c\sqrt{p}}{\|x\|_2^2}(x_{\max}+2\sigma\sqrt{\log n})\leq
\frac{4c\sqrt{p}\,x_{\max}}{\|x\|_2^2}$ with probability at least
$1-2/n-2^{-n/5}$.

Now we can verify that both
$-Y_2(Y_2^TY_2)^{-1}Y^{T}_1\lambda\text{sgn}(u^*_I)$ and $(I-Y_2
(Y_2^TY_2)^{-1}Y^T_2)e_{I^c}$ are small.

First, based on claim (ii) in Lemma \ref{claim3}, with probability
at least $1-\frac{2+4p}{n}-2^{-n/5}$
\begin{eqnarray*}
\|-Y_2(Y_2^TY_2)^{-1}Y^{T}_1\lambda\text{sgn}(u^*_I)\|_\infty\leq
\frac{4c\sqrt{p}\,x_{\max}}{\|x\|_2^2}\cdot2\lambda\|x\|_2\sqrt{\log
n}\cdot\sqrt{p}<\lambda/3
\end{eqnarray*}
where the last inequality follows from condition (3) in Subsection
\ref{sec:blind}.

 Next, it
is easy to bound $\|e_{I^c}\|_\infty\leq 2\sigma pa_{\max}\sqrt{\log
n}\leq \lambda/3$ with probability at least $1-2/n$. Also, we have
with probability at least $1-\frac{4p+2}{n}-(p+1)2^{-n/5}$
\begin{eqnarray*}
\|Y_2 (Y_2^TY_2)^{-1}Y^T_2 e_{I^c}\|_\infty\leq
\frac{4c\sqrt{p}\,x_{\max}}{\|x\|_2^2}\cdot2\sqrt{n\log n}\sigma
a_{\max}x_{\max}\sqrt{2p}\cdot\sqrt{p}<\lambda/3
\end{eqnarray*}
where the last inequality follows from claim (i) of Lemma
\ref{claim3}, condition (3) in Subsection \ref{sec:blind} and the
assumption $\lambda\geq 6\sigma p a_{\max}\sqrt{\log n}$.
\end{proof}

\section{Numerical Experiments}\label{sec:simulation}
We present simulations for some interesting cases. Theorem \ref{thm1} asserts that as
long as RIP is satisfied, stability assumptions on $H$ hold, and the spikes are well separated, our $\ell_1$-minimization
algorithm reconstructs the AR process correctly. For general
IID Gaussian or Bernoulli matrix ensemble (not Toeplitz), it is well known
that \cite{Candes3} $m\geq O(S\log (n/S))$ ensures good RIP
property. However, for our specific Toeplitz structured sensing
matrix (Equation \ref{eq:sense}), this question (when RIP is
satisfied) has not been fully answered.

We nevertheless experiment with Toeplitz constructions. First we simulate our algorithm for a third order process. The results are depicted in Figure~\ref{fig:recons}. We see that the reconstruction reproduces both the spike train as well as the filtered process accurately. For the purpose of depiction we added a small amount of noise.
\begin{figure}[t]
\begin{centering}
\begin{minipage}[t]{.48\linewidth}
\includegraphics[width = 4.0cm]{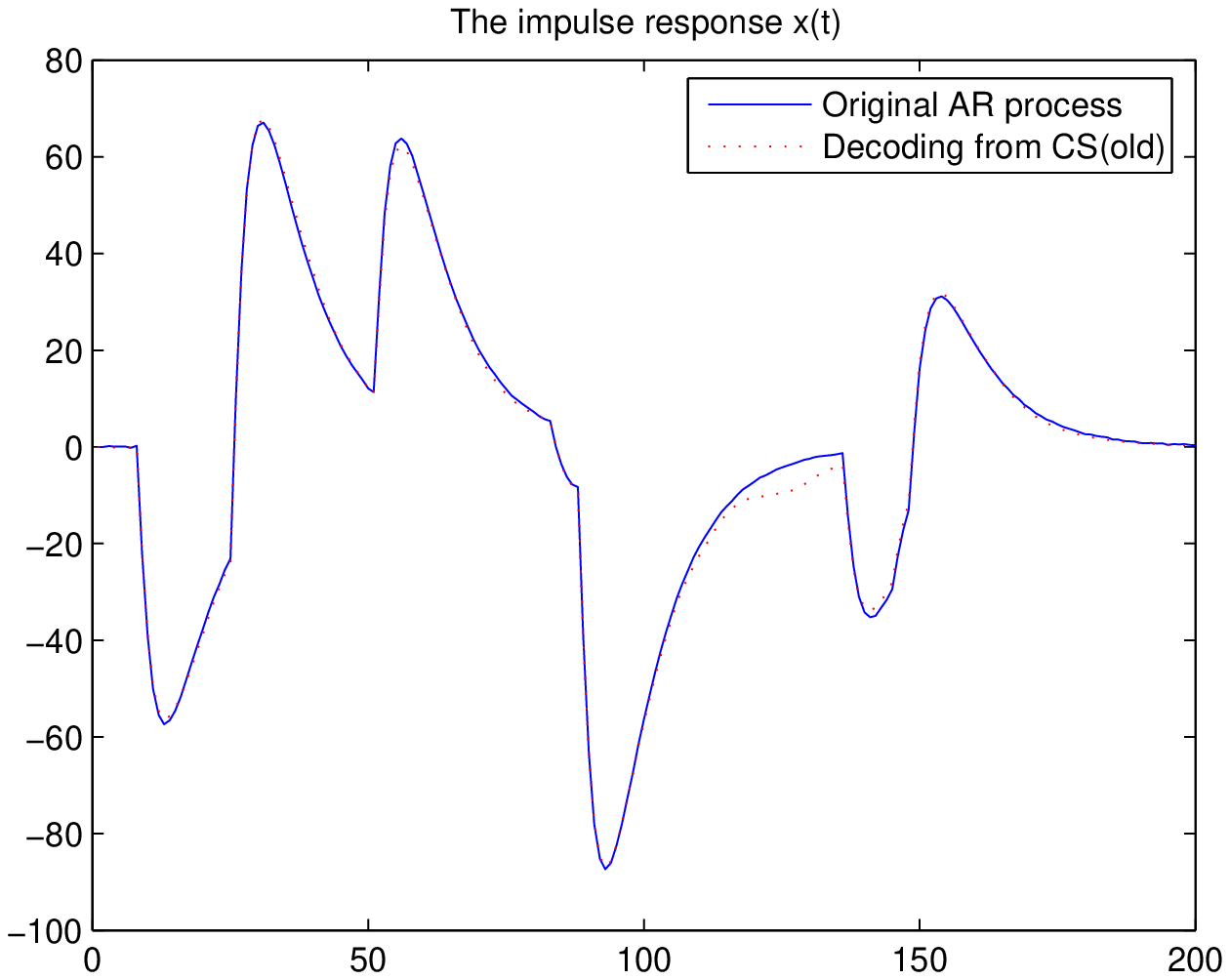}\\
\end{minipage}
\begin{minipage}[t]{.48\linewidth}
\includegraphics[width = 4.0cm]{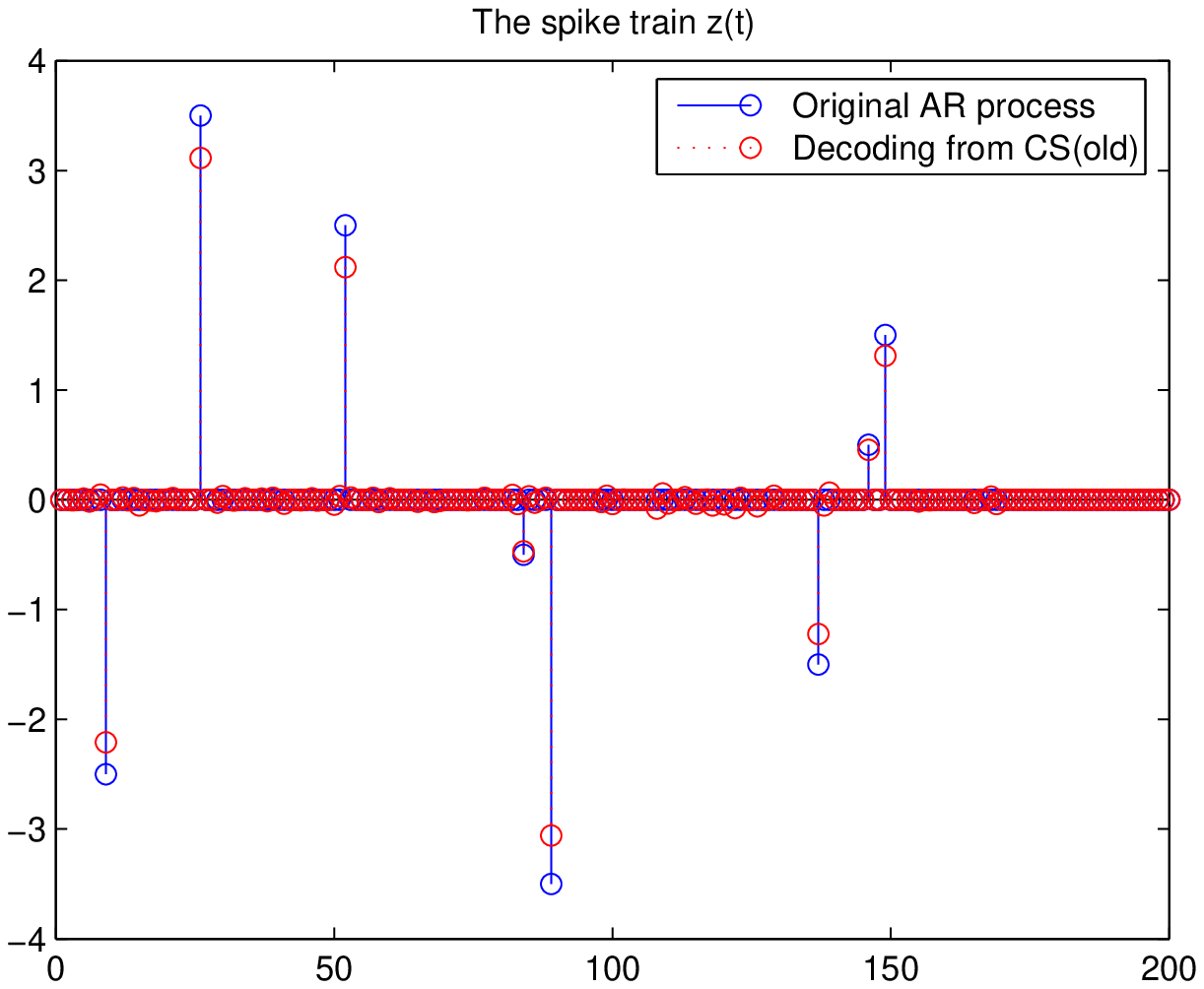}\\
\end{minipage}
\caption{\small $\ell_1$-minimization algorithm on the model $y=Gx + w$
with $G$ an $80\times 200$ Toeplitz Gaussian matrix ensemble. The filtered process $x(n)$ is obtained by filtering a $8$ sparse spike train through a  third-order AR process with poles $\alpha_1=0.9$, $\alpha_2=0.5$ and $\alpha_3=0.2$. The measurements were contaminated with zero mean Gaussian noise with variance $0.1$.} \label{fig:recons}
\end{centering}
\end{figure}

\begin{figure}[t]
\begin{centering}
\begin{minipage}[t]{.48\textwidth}
\includegraphics[width = 1\textwidth]{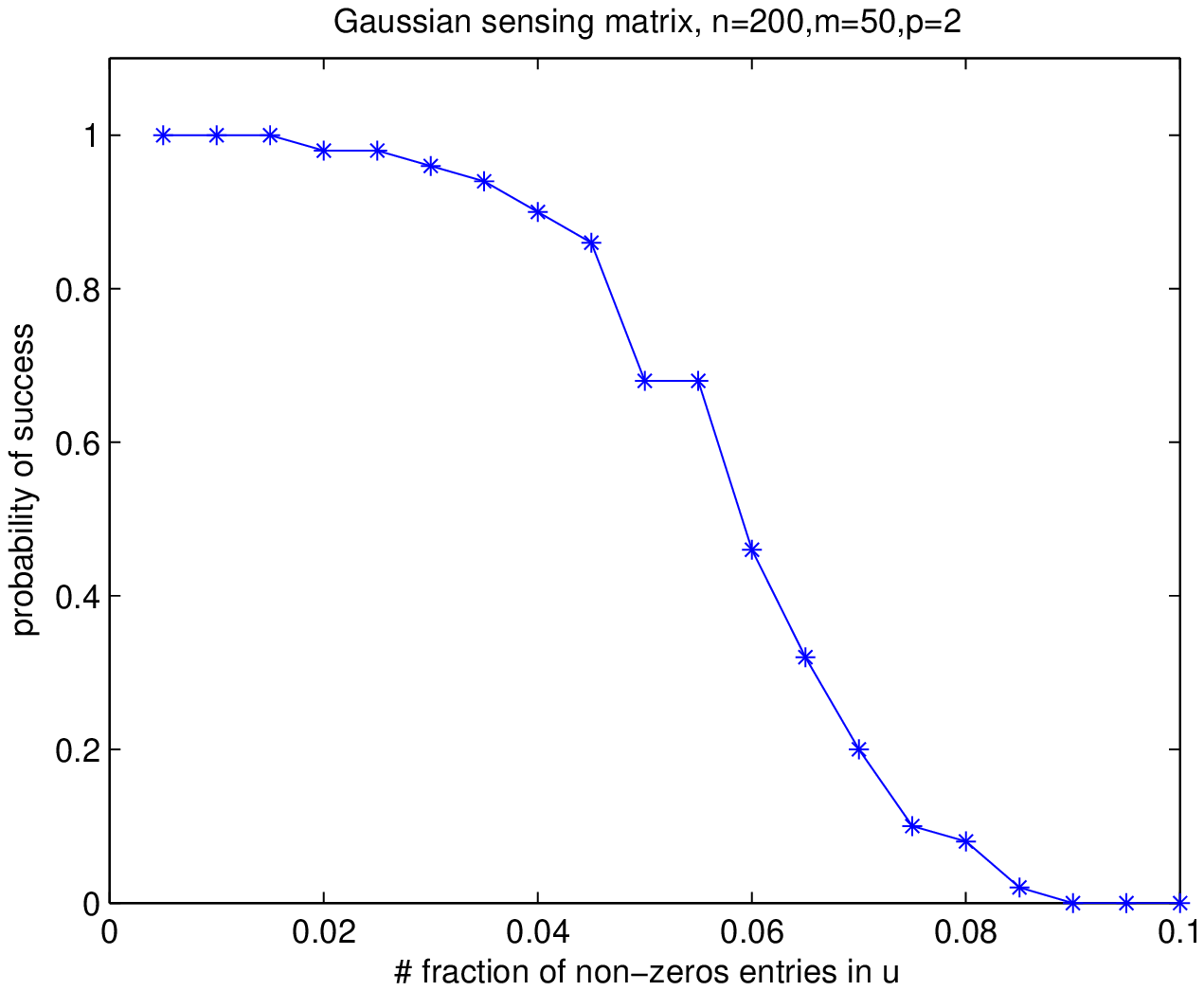}\\
\makebox[7 cm]{(a)}
\end{minipage}
\begin{minipage}[t]{.48\textwidth}
\includegraphics[width = 1\textwidth]{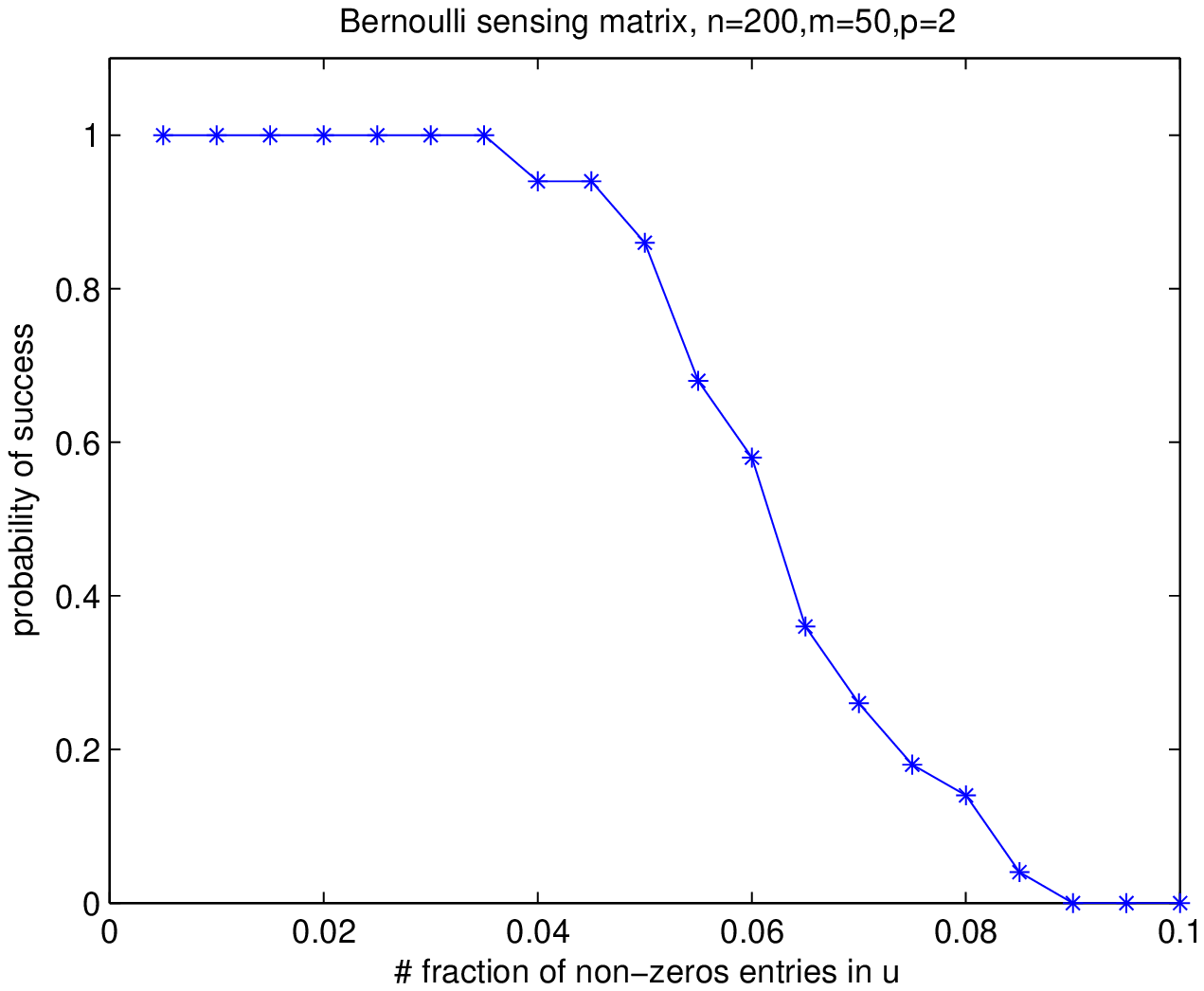}\\
\makebox[7 cm]{(b)}
\end{minipage}
\caption{\small $\ell_1$-minimization algorithm on the model $y=Gx$
with $G$ an $50\times 200$ Toeplitz matrix with independent Gaussian
or Bernoulli entries. In this experiment $x(n)$ is a second-order AR
process with poles $\alpha_1=0.9$ and $\alpha_2=0.5$; (a) success
rate when $G$ is Gaussian $\mathcal{N}(0,1)$; (b) success rate when
$G$ is Bernoulli $\pm 1$.} \label{fig:cutoff}
\end{centering}
\end{figure}

First, we fix the size of sensing matrix ($m=50, n=200$) and choose
the entries of sensing matrix $G$ to be Gaussian. We also fix the
order of the AR model ($p=2$) and let the sparsity $k$ vary from $1$
to $20$. For each fixed $k$, we run our $\ell_1$-minimization
algorithm $50$ times to obtain the average performance. The result is shown in Figure \ref{fig:cutoff}(a).
Similarly, we can choose the sensing matrix $G$ to be Bernoulli $\pm
1$ and do the same experiment again. The result is shown in Figure
\ref{fig:cutoff}(b). We can see that in this example Toeplitz
Bernoulli
 matrix is more preferable than Toeplitz Gaussian matrix.

Next, we run our algorithm on a case that does not satisfy our assumptions on stability. Specifically we consider the situation when the true process  is governed by the equation $x(n)-x(n-1)=u(n)$. This type of model is closely associated with problems that arise when one is interested in minimizing total variations. Note that in this model $\alpha=1$ and it
does not satisfy the assumptions of Theorem \ref{thm1} where we
assume $\alpha_{\max}<1$. We adopt the same sensing matrix
as the last experiment (Gaussian or Bernoulli) and the empirical
success rate of this experiment is shown in Figure
\ref{fig:cutoff2}.
\begin{figure}[t]
\begin{centering}
\begin{minipage}[t]{.48\textwidth}
\includegraphics[width = 1\textwidth]{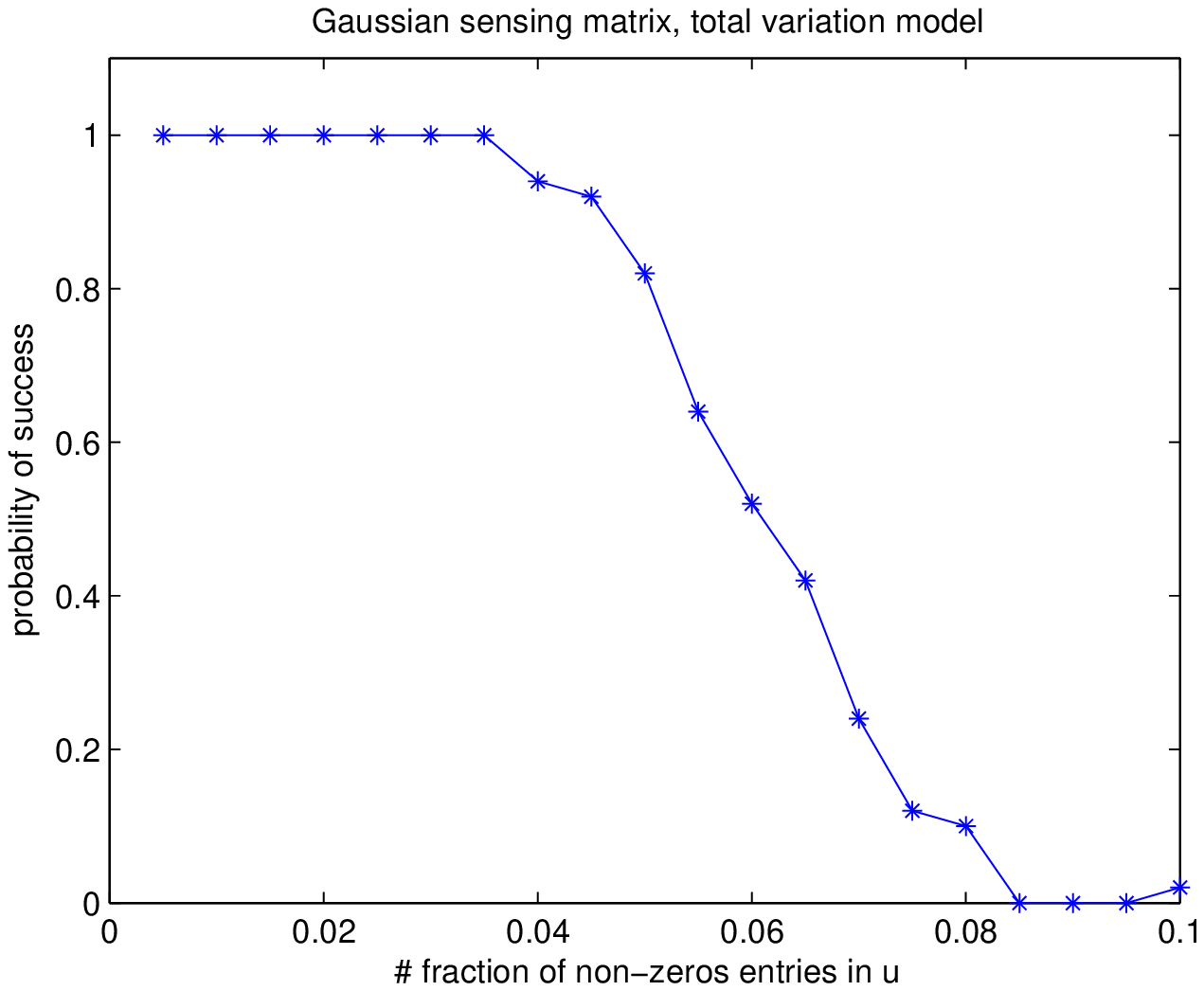}\\
\makebox[7 cm]{(a)}
\end{minipage}
\begin{minipage}[t]{.48\textwidth}
\includegraphics[width = 1\textwidth]{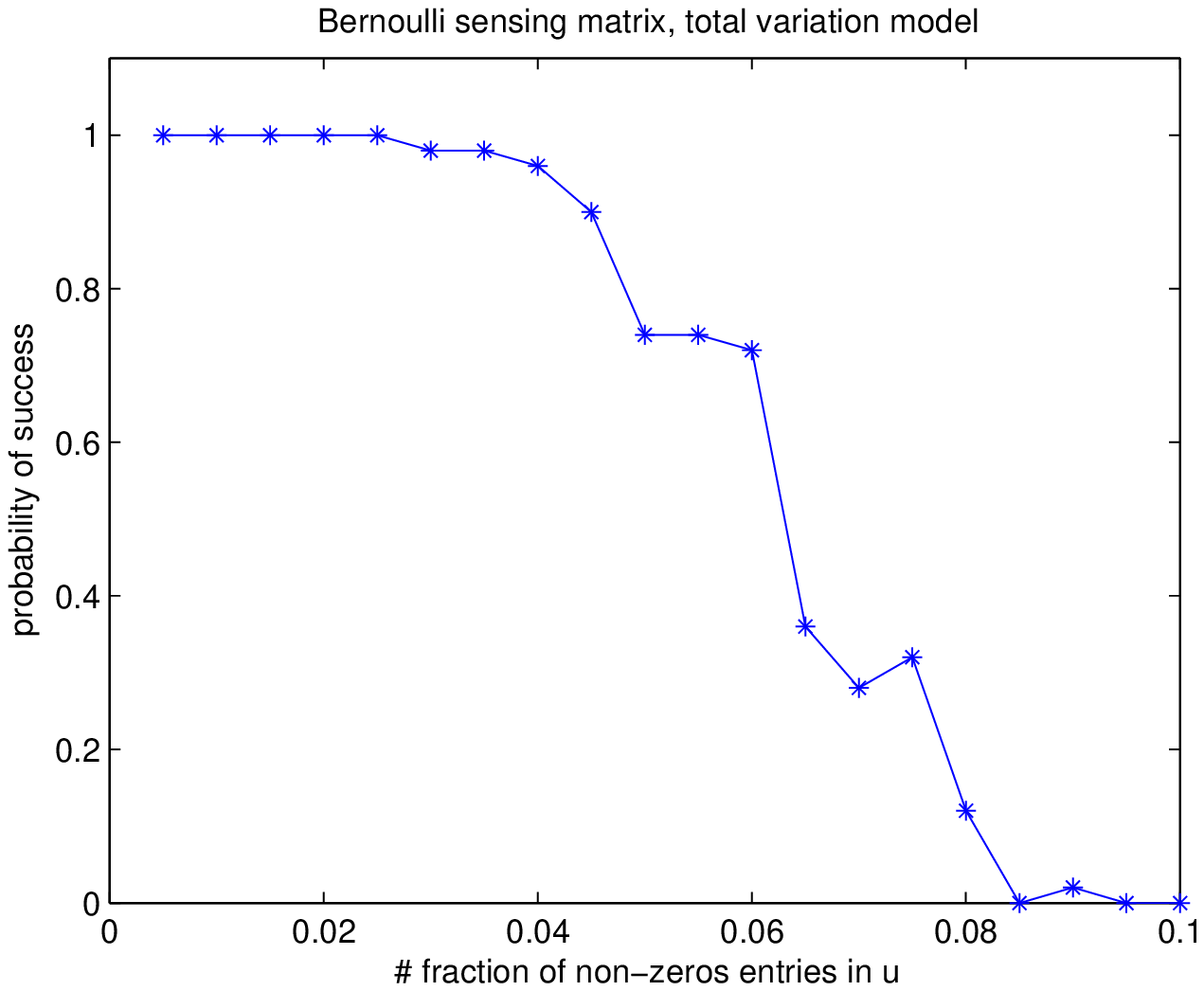}\\
\makebox[7 cm]{(b)}
\end{minipage}
\caption{\small $\ell_1$-minimization algorithm on the model $y=Gx$
with $G$ an $50\times 200$ Toeplitz matrix with independent Gaussian
or Bernoulli entries. In this experiment $x(n)$ is total variation
process $x(n)-x(n-1)=u(n)$; (a) success rate when $G$ is Gaussian
$\mathcal{N}(0,1)$; (b) success rate when $G$ is Bernoulli $\pm 1$.}
\label{fig:cutoff2}
\end{centering}
\end{figure}

\begin{figure}[t]
\begin{centering}
\begin{minipage}[t]{.48\textwidth}
\includegraphics[width = 1\textwidth]{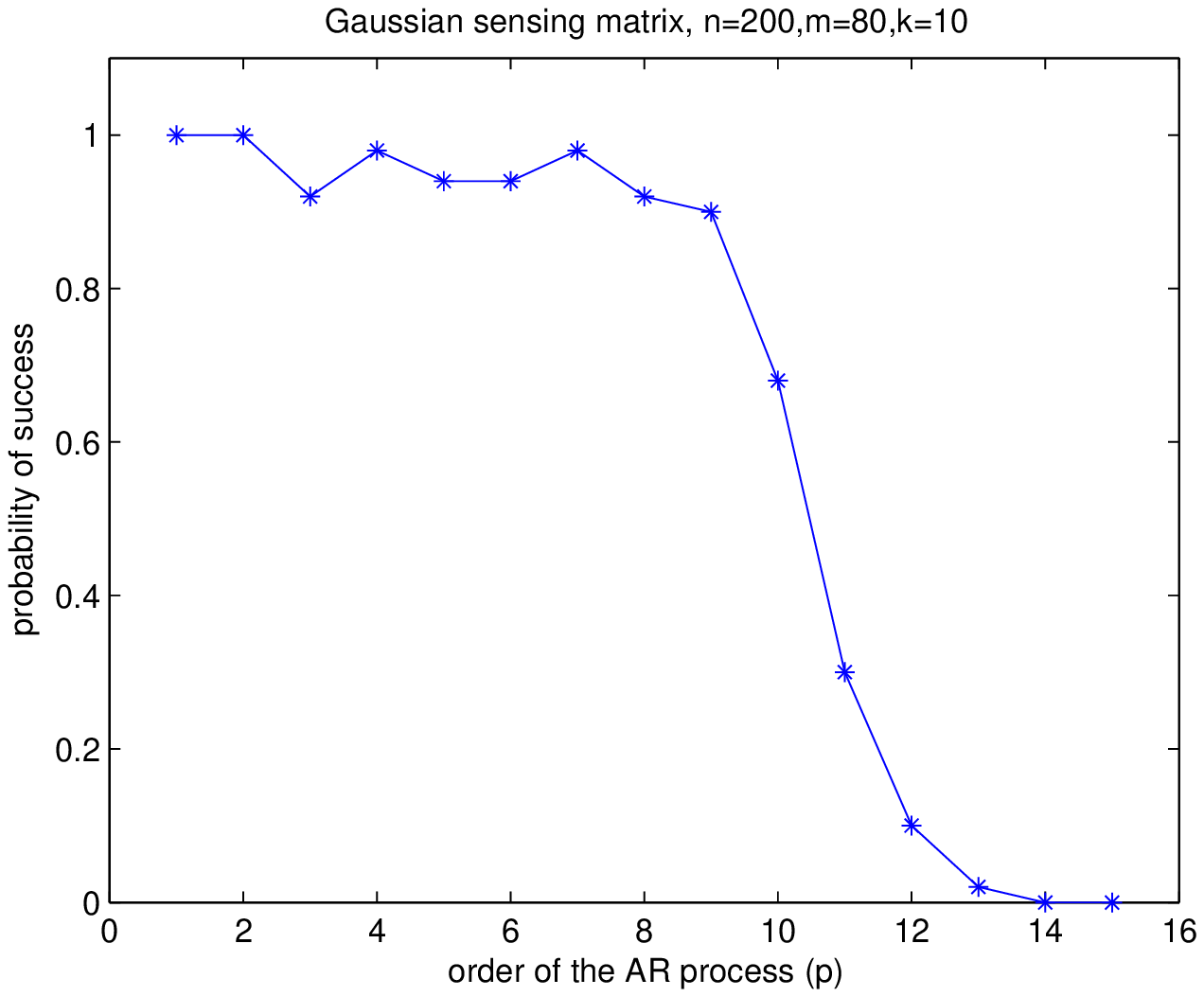}\\
\makebox[7 cm]{(a)}
\end{minipage}
\begin{minipage}[t]{.48\textwidth}
\includegraphics[width = 1\textwidth]{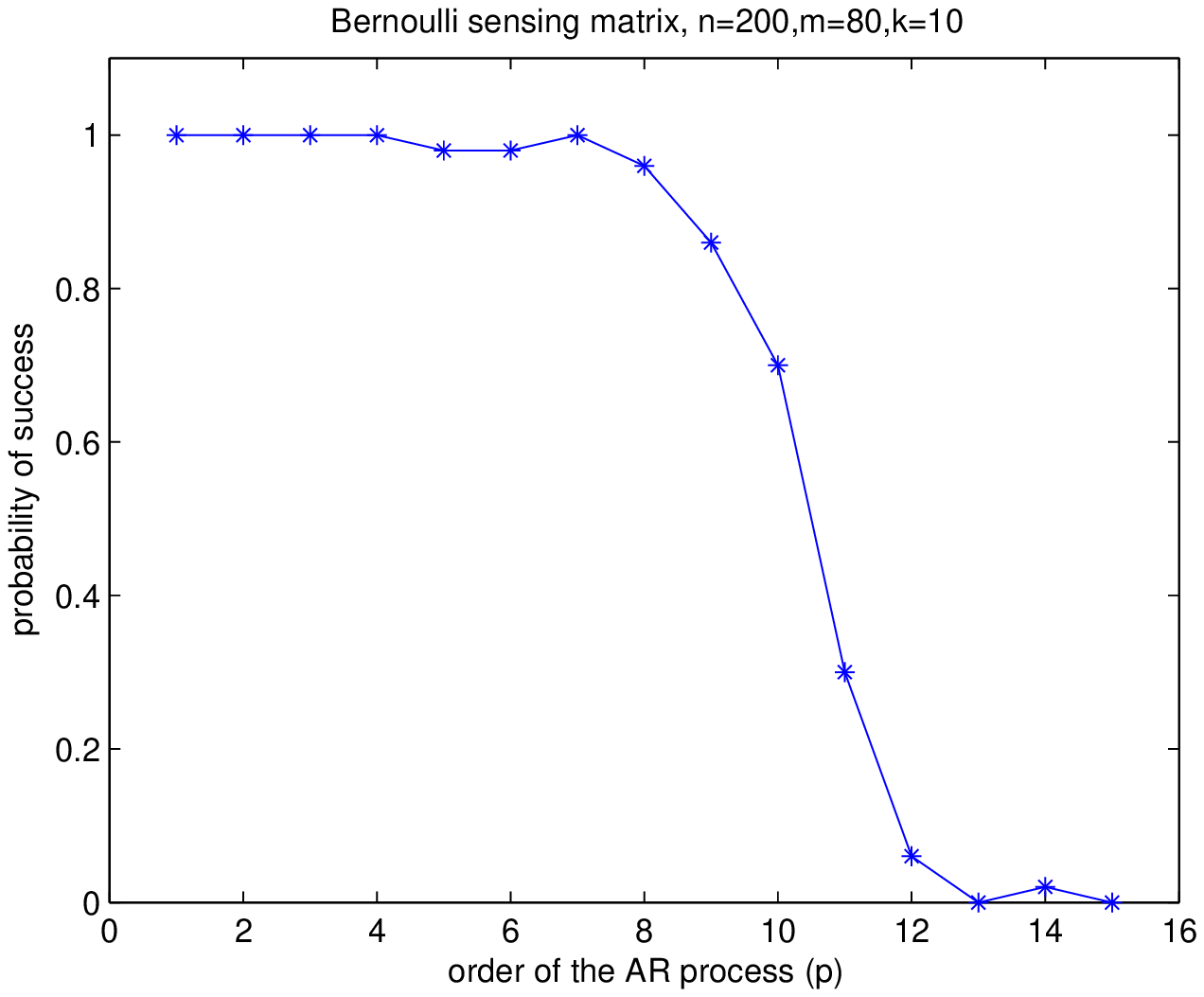}\\
\makebox[7 cm]{(b)}
\end{minipage}
\caption{\small $\ell_1$-minimization algorithm on the model $y=Gx$
with $G$ an $80\times 200$ Toeplitz matrix with independent Gaussian
or Bernoulli entries. In this experiment the order of the AR process
$x(n)$ is a variable, i.e., $p\in [1,15]$; (a) success rate when $G$
is Gaussian $\mathcal{N}(0,1)$; (b) success rate when $G$ is
Bernoulli $\pm 1$.} \label{fig:cutoff3}
\end{centering}
\end{figure}

Finally we test how the order of the AR process influences
the performance of the algorithm. In this experiment, we fix the
size of the sensing matrix as $80\times 200$ and also fix the
sparsity $k=10$ (i.e., the $\#$ fraction of nonzero components in
$z$ is $5\%$). We let $p$ (order of the AR process) vary from $1$ to
$15$. Figure \ref{fig:cutoff3}(a) shows that empirical success rate
for the Gaussian sensing matrix and Figure \ref{fig:cutoff3}(b)
shows that success rate for the Bernoulli sensing matrix. We can see
that again Bernoulli Toeplitz matrix outperforms  the Gaussian
Toeplitz matrix.

\bibliographystyle{IEEEtran}
\bibliography{ARCS}

\end{document}